\renewcommand{\thealgorithm}{}
\newtheorem{theorem}{Theorem}[section]
\newtheorem{observation}{Observation}[section]
\newtheorem{claim}{Claim}[section]
\crefname{theorem}{Theorem}{Theorems}
\Crefname{lemma}{Lemma}{Lemmas}
\Crefname{claim}{Claim}{Claims}
\Crefname{observation}{Observation}{Observations}
\Crefname{algorithm}{Algorithm}{Algorithms}
\Crefname{myalgctr}{Algorithm}{Algorithms}
\Crefname{challenge}{Challenge}{Challenges}
\Crefname{figure}{Figure}{Figures}
\Crefname{hypothesis}{Hypothesis}{Hypotheses}
\theoremstyle{definition}
\newtheorem{definition}{Definition}[section]
\newtheorem{notation}{Notation}[section]
\DeclareMathOperator\dist{dist}
\DeclareMathOperator\len{len}
\newcommand\eps{\varepsilon}
\newcommand{\otil}{\tilde{O}}
\newcommand{\gstar}{G^{*}}
\newenvironment{breakablealgorithm}
  {
   \begin{center}
     \refstepcounter{algorithm}
     \hrule height.8pt depth0pt \kern2pt
     \renewcommand{\caption}[2][\relax]{
       {\raggedright\textbf{\fname@algorithm~\thealgorithm} ##2\par}%
       \ifx\relax##1\relax 
         \addcontentsline{loa}{algorithm}{\protect\numberline{\thealgorithm}##2}%
       \else 
         \addcontentsline{loa}{algorithm}{\protect\numberline{\thealgorithm}##1}%
       \fi
       \kern2pt\hrule\kern2pt
     }
  }{
     \kern2pt\hrule\relax
   \end{center}
  }
\title{Closing the Gap Between Directed Hopsets and Shortcut Sets}
\author{Aaron Bernstein\thanks{Rutgers University. Funded by NSF CAREER grant 1942010.} \and Nicole Wein\thanks{DIMACS, Rutgers University. Supported by a grant to DIMACS from the Simons Foundation (820931).}}
\date{}
\begin{document}

\maketitle

\pagenumbering{gobble}

\begin{abstract}
For an n-vertex directed graph $G = (V,E)$, a $\beta$-\textit{shortcut set} $H$ is a set of additional edges $H \subseteq V \times V$ such that $G \cup H$ has the same transitive closure as $G$, and for every pair $u,v \in V$, there is a $uv$-path in $G \cup H$ with at most $\beta$ edges. A natural generalization of shortcut sets to distances is a $(\beta,\eps)$-\emph{hopset} $H \subseteq V \times V$, where the requirement is that $H$ and $G \cup H$ have the same shortest-path distances, and for every $u,v \in V$, there is a $(1+\eps)$-approximate shortest path in $G \cup H$ with at most $\beta$ edges.

There is a large literature on the question of the tradeoff between the optimal size of a shortcut set / hopset and the value of $\beta$. In particular we highlight the most natural point on this tradeoff: what is the minimum value of $\beta$, such that for any graph $G$, there exists a $\beta$-shortcut set $H$ with $O(n)$ edges? Similarly, what is the minimum value of $\beta$ such that there exists a $(\beta,\epsilon)$-hopset with $O(n)$ edges? Not only is this a very natural structural question in its own right, but shortcuts sets / hopsets form the core of a large number of distributed, parallel, and dynamic algorithms for reachability / shortest paths.

A lower bound of Hesse [SODA 2003] for directed graphs shows that if we restrict ourselves to hopsets with $O(n)$ edges, the best we can guarantee is $\beta = \Omega(n^{1/17})$ for both shortcut sets and hopsets [SODA 2003]; this was later improved to $\beta = \Omega(n^{1/6})$ by Huang and Pettie [SWAT 2018]. Until very recently the best known upper bound was a folklore construction showing $\beta = O(n^{1/2})$, but in a breakthrough result Kogan and Parter [SODA 2022] improve this to $\beta = \otil(n^{1/3})$ for shortcut sets and $\otil(n^{2/5})$ for hopsets. 

Our result in this paper is to close the gap between shortcut sets and hopsets introduced by the result of Kogan and Parter. That is, we show that for any graph $G$ and any fixed $\eps$ there is a $(\otil(n^{1/3}),\eps)$ hopset with $O(n)$ edges. Our hopset improves upon the $(\otil(n^{2/5}),\eps)$ hopset of Kogan and Parter.
More generally, we achieve a smooth tradeoff between hopset size and $\beta$ which exactly matches the tradeoff of Kogan and Parter for the simpler problem of shortcut sets (up to polylog factors). 

Additionally, using a very recent black-box reduction of Kogan and Parter, our new hopset immediately implies improved bounds for approximate distance preservers. 

\end{abstract}

\clearpage
\tableofcontents
\clearpage

\pagenumbering{arabic}

\section{Introduction}

Computing reachability and shortest paths in a directed graph is one of the most fundamental problems in graph algorithms. In a wide range of settings, these problems turn out to be easier if the path in question contains few edges (even if the edges have high weight). This dependency motivated the notion of \textit{shortcut sets}, introduced by Thorup \cite{Thorup92}. Given a graph $G$, the goal is to add a new set of edges $H$ such that $G \bigcup H$ has the same transitive closure as $G$, but for any pair of vertices $x,y$ there is an $xy$-path in $G \bigcup H$ with few edges. A generalization of a shortcut set is the notion of a \textit{hopset} $H$, where the goal is to preserve not just reachability but (weighted) distances.  We now formally define both these notions. 



\begin{definition}[Shortcut Set]
\label{def:shortcut-set}
Given an unweighted graph $G=(V,E)$, a $\beta$-shortcut set is a set of edges $H \subseteq V \times V$ such that for all $u,v \in V$ the following holds: {\bf (1)} $u$ can reach $v$ in $G \cup H$ if and only if $u$ can reach $v$ in $G$ and {\bf (2)} if $u$ can reach $v$ in $G \cup H$ then there is a $uv$-path in $G \cup H$ with at most $\beta$ edges.
\end{definition}


\begin{definition}[Hopset]
\label{def:hopset}
Given a graph $G=(V,E)$ with non-negative weights, a $(\beta,\eps)$-hopset is a set of edges $H \subseteq V \times V$ with non-negative weights such that such that for all $u,v \in V$ the following holds: {\bf (1)}  $\dist_G(u,v) = \dist_{G \cup H}(u,v)$, where $\dist(u,v)$ refers to the weighted distance between $u$ and $v$. And {\bf (2)} If $u$ can reach $v$ then there is a $uv$-path $P_{uv}$ in $G \cup H$ such that $P_{uv}$ contains at most $\beta$ edges and the weight of $P_{uv}$ is at most $(1+\eps) \dist(u,v)$
\end{definition}


\paragraph{Motivation}
There are two trivial extremes for hopset construction. Setting $H = \emptyset$ yields a $(n,0)$ hopset. On the other hand, if for every pair of vertices $u,v$ we add an edge to $H$ with $w(u,v) = \dist(u,v)$ then this yields a $(1,0)$ hopset with $n^2$ edges. The interesting question is thus to achieve a tradeoff between the number of edges and the parameter $\beta$. Perhaps the most natural setting to consider is as follows: if we restrict $H$ to contain $O(n)$ edges, what is the minimum $\beta$ we can guarantee?

The above is a natural question in extremal graph theory in its own right. But hopsets also play a crucial role in computing reachability and shortest path in directed graphs in a wide variety of models of computation such as distributed, parallel, and dynamic algorithms \cite{KleinS97,HenzingerKN14,HenzingerKN15,ForsterN18,LiuJS19,Fineman18,GutenbergW20,BernsteinGW20,CaoFR20,CaoFR21,KarczmarzS21}. In all of these models, most state-of-the-art algorithms for computing shortest paths start by first computing a hopset $H$ and then computing shortest paths in $G \cup H$, taking advantage of the fact that these shortest paths are guaranteed to contain at most $\beta$ edges. Note that for this second step to be efficient, it is crucial that $\beta$ is small and that $H$ contain relatively few edges. This brings us back to the original question of what kind of trade-offs are possible between these parameters. This question is further subdivided into three subproblems, in increasing order of generalization: shortcut sets capture reachability but not distances, $(\beta,\eps)$ hopsets capture $(1+\eps)$-approximate distances, and $(\beta,0)$-hopsets capture exact distances. 

\subsection{Previous work}
A folklore randomized construction for hopsets, attributed to Ullman and Yannakakis \cite{UllmanY91} and refined by Berman et al. \cite{BermanRR10}, is to randomly sample a set of vertices $S \subseteq V$, and then add an edge of weight $\dist(u,v)$ for all pairs $u,v \in S$ such that $u$ reaches $v$. This yields a $(\beta, 0)$ hopset with $\otil(n^2/\beta^2)$ edges.  In particular, there is a $(\sqrt{n},0)$-hopset with $\otil(n)$ edges. Surprisingly, this simple construction still achieves the best known trade-off for $(\beta, 0)$ hopsets. Existing work on the problem thus focuses on the simpler problems of shortcut sets and $(\beta, \eps)$-hopsets.

In \emph{undirected} graphs both of these problems admit sparse hopsets with small $\beta$. Shortcut sets are trivial in undirected graphs, as one can simply add a star to each connected component. For $(\beta,\eps)$ hopsets, a series of different constructions \cite{KleinS97,ShiS99,Cohen00} culminated in the papers of Huang and Pettie \cite{HuangP19} and Elkin and Neiman \cite{ElkinN19}, which showed a $(n^{o(1)}, \eps)$ hopset with $O(n)$ edges and a $(O(1),\eps)$ hopset with $n^{1+o(1)}$ edges. where $\eps$ is any fixed constant. Note that these are clearly optimal up to $n^{o(1)}$ factors, and moreover a lower bound of Abboud, Bodwin, and Pettie \cite{AbboudBP18} showed that a $n^{o(1)}$ factor is necessary. See also \cite{BenLevyP20,ShabatN21} for follow-up work that refined the $n^{o(1)}$ factor. 

The focus of our paper is on \textit{directed} graphs, where such hopset tradeoffs are provably impossible. Improving upon a previous lower bound of Hesse \cite{Hesse03}, Huang and Pettie \cite{HuangP21} showed that there exist graphs such that that any $O(n)$-size shortcut set cannot reduce $\beta$ to below $\Omega(n^{1/6}$); they also show that any $O(m)$-size shortcut set cannot reduce $\beta$ to below $\Omega(n^{1/11})$, and follow up work by Lu, Williams, Wein, and Xu improves this to $\Omega(n^{1/8})$ \cite{LuWWX22}. In particular, these lower bounds show a polynomial separation between hopsets in directed and undirected graphs. Moreover, all of these lower bounds apply even to the simpler problem of shortcut sets.

Until extremely recently, the best known upper bound in directed graphs was the folklore algorithm mentioned above. This left a large gap between the best-known upper and lower bounds: focusing on the standard case of a shortcut set $H$ with $O(n)$ edges, the best known upper bound achieved $\beta = \otil(n^{1/2})$, while the best known lower bound was $\beta = \Omega(n^{1/6})$. Very recently, in a major breakthrough, Kogan and Parter presented the first hopset to go beyond the folklore construction \cite{KP}. They presented a smooth trade-off between $\beta$ and the size of the hopset, but again focusing on the standard case of a set $H$ with $O(n)$ edges, they showed that every graph contains a $\otil(n^{1/3})$-shortcut set and an $(\otil(n^{2/5}),\eps)$ hopset for any fixed $\eps$. 

The result of Kogan and Parter shows the possibility of going beyond $\beta = \sqrt{n}$ for both shortcut sets and $(1+\eps)$-approximate hopsets, but it also leaves a polynomial gap between these two settings (i.e. $n^{1/3}$ vs. $n^{2/5})$. Our contribution in this paper is to close this gap.

\subsection{Our Contribution}

\begin{theorem}\label{thm:main}
For any directed graph with integer edge weights in $[1,W]$, given $\eps \in (0,1)$ and $\beta\geq 20\log n$, there is a $(\beta,\eps)$-hopset $H$ of size
\[|H|=
\begin{dcases}
O\Big(\frac{n^2 \cdot\log^7 n \log^2(nW)}{\eps^2 \beta^3}\Big) & \text{for } \beta\leq n^{1/3},\\[1em]
O\Big(\frac{n^{3/2} \cdot\log^7n \log^2(nW)}{\eps^2 \beta^{3/2}}\Big) & \text{for } \beta> n^{1/3}.
\end{dcases}\]

\end{theorem}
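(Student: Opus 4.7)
The plan is to reduce the hopset construction to a weighted adaptation of the Kogan--Parter shortcut set, via a dyadic distance-scaling reduction followed by an edge-weight discretization. The approach has three phases.

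\textbf{Phase 1 (Dyadic distance scaling).} For each scale $D\in\{2^0,2^1,\ldots,2^{\lceil\log_2(nW)\rceil}\}$ I build a partial hopset $H_D$ responsible for all pairs $u,v$ with $\dist_G(u,v)\in[D,2D)$, and take $H=\bigcup_D H_D$. This contributes an $O(\log(nW))$ factor to the overall size, so it suffices to construct each $H_D$ of size $\otil(n^2/(\eps^2\beta^3))$ in the regime $\beta\leq n^{1/3}$ (resp.\ $\otil(n^{3/2}/(\eps^2\beta^{3/2}))$ for $\beta>n^{1/3}$) that $(1+\eps)$-approximates every scale-$D$ pair using at most $\beta$ edges.

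\textbf{Phase 2 (Weight discretization).} Fix a scale $D$. I discard edges of weight exceeding $2D$ (they cannot lie on any scale-$D$ approximate shortest path), and round each remaining edge weight up to the nearest multiple of $\tau:=\eps D/\beta$. The cumulative additive rounding error along any $\beta$-hop path is at most $\beta\tau=\eps D\leq\eps\cdot\dist_G(u,v)$, well within the $(1+\eps)$-budget. After rounding, every relevant shortest path has weight at most $2D$ and consists of at most $2\beta/\eps$ edges, and each edge takes one of only $O(\beta/\eps)$ distinct weight values.

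\textbf{Phase 3 (Weighted Kogan--Parter).} The Kogan--Parter shortcut-set construction, informally, samples pivot vertices at rate $\sim 1/\beta$, grows truncated forward/backward BFS trees at each pivot, and uses a pigeonhole argument on ``surviving'' long paths to certify the existence of short $\beta$-hop paths between all reachable pairs. I adapt each ingredient to the weighted setting on the rounded graph: BFS trees become truncated weighted shortest-path trees (with truncation measured in units of $\tau$), pivots are chosen to split candidate paths into roughly equal \emph{hop-length} segments, and each added hopset edge carries its actual $\dist_G$-value. The $\beta>n^{1/3}$ branch of the theorem uses the corresponding branch of the K--P tradeoff.

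\textbf{Main obstacle.} The technical crux lies in Phase 3: faithfully reproducing the K--P combinatorial analysis in the weighted setting. The original analysis relies on (i) a sampling argument showing that every sufficiently long path meets many pivots with high probability and (ii) a pigeonhole argument bounding the number of surviving long paths after an initial layer of shortcuts. Both must be redone with paths measured by weighted length (after rounding, with up to $2\beta/\eps$ hops per relevant path, rather than $\beta$). To compensate for this larger effective hop-diameter, the sampling rate and the pigeonhole threshold must each be rescaled by a factor of $1/\eps$; the product of these rescalings introduces the $1/\eps^2$ factor in the final size bound, while preserving the $\otil(n^2/\beta^3)$ per-scale cost of K--P. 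Summing over the $O(\log(nW))$ scales, with $\log^2(nW)$ arising from an additional $\log(nW)$ factor in the concentration arguments used for pivot sampling, yields the claimed bound.
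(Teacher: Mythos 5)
There is a genuine gap, and it sits exactly where you defer the work: Phase 3. Reducing to bounded weighted diameter via dyadic scaling and rounding does not address the actual obstacle to extending Kogan--Parter from shortcut sets to hopsets, which is that their $\tilde{O}(n^{1/3})$ framework relies on the input being a DAG. For shortcut sets one may contract strongly connected components, so within each ``nice path'' only \emph{forward} pairs ($x$ before $y$ on the path) ever need a low-hop connection, and $O(|P|\log|P|)$ edges per path suffice. For hopsets no such reduction exists: your scaling and rounding change weights but leave the cycle structure of $G$ intact, so a vertex appearing \emph{after} another on a nice path may still reach it, and those backward pairs must also get low-hop $(1+\eps)$-approximate paths. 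Crucially, handling backward pairs with only $\tilde{O}(|P|)$ edges per path and subpolynomial hopbound is \emph{provably impossible} (this follows from Hesse's shortcut-set lower bound applied to a path built on top of his hard DAG), so no rescaling of sampling rates or pigeonhole thresholds by $1/\eps$ can make a ``weighted K--P per nice path'' go through. The paper's proof has to replace this step with two new ingredients your proposal lacks: a backward-shortcutting subroutine whose additive error is parameterized by how the nice path detours around the canonical shortest path $R(x,y)$, and a hierarchy of $\log n$ sampling levels (more paths, fewer vertices at higher levels) together with a walk-based analysis of easy versus hard intervals that converts many-hop detours into sampling advantage. The $1/\eps^2$ and polylog factors in the final bound come from these components, not from rescaling a shortcut-set analysis.

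Two secondary problems. First, your Phase 2 rounding is not sound as stated: rounding every edge weight up to a multiple of $\tau=\eps D/\beta$ inflates the length of a true shortest path by $h\cdot\tau$ where $h$ is its hop count, which can be as large as $n\cdot\eps D/\beta \gg \eps D$; so distances in the rounded graph need not $(1+\eps)$-approximate $\dist_G$, and an analysis carried out on the rounded graph does not certify the hopset property for $G$ (the paper avoids this by working with nice paths taken from the weighted transitive closure $G^*$ and never rounding edge weights). Second, your description of the K--P construction (pivots with truncated BFS trees and a pigeonhole count of surviving long paths) does not match the construction you would actually need to adapt, which is based on a maximal collection of vertex-disjoint low-hop shortest paths in $G^*$ plus vertex-to-path sampling; this mismatch makes the claim that the combinatorics ``can be redone with weighted lengths'' unverifiable as written.
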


Our result improves polynomially upon the previous-best tradeoff for $(\beta,\eps)$ hopsets of Kogan and Parter (see Theorem 1.5 in \cite{KP}). Most notably, for a hopset $H$ with $O(n)$ edges, and any fixed $\eps$, Kogan and Parter \cite{KP} construct a $(\otil(n^{2/5}), \eps)$ hopset, whereas we construct a $(\otil(n^{1/3}),\eps)$-hopset. Moreover, up to logarithmic factors, our tradeoff exactly matches that of Kogan and Parter for the simpler problem of $\beta$-shortcut sets. We thus close the gap between $(1+\eps)$-approximate hopsets and shortcut sets.


\paragraph{Construction Time}
The focus of this paper is on existential claims about hopset trade-offs, so we make no attempt to optimize the time required to construct the hopset. Nonetheless it is easy to check that the construction in this paper can be executed in polynomial time. On the other hand, a clear bottleneck in our construction is that it requires computing the transitive closure, so even if the algorithm is refined, our approach necessarily requires a runtime of $\Omega(mn)$. 
The hopset of Kogan and Parter \cite{KP} also requires $\Omega(mn)$ construction time, but they were later able to achieve significantly faster construction time of $\tilde{O}(mn^{1/3} + n^{1.5})$ for their linear-size $\otil(n^{1/3})$- \emph{shortcut} set \cite{KP22icalp}. 
Achieving similarly fast construction times for the more general problem of hopsets remains an intriguing open problem.

\paragraph{Application to Distance Preservers}

In addition to their critical role in many shortest path algorithms (see above), one of the main motivations for studying hopsets is their close connection to other fundamental problems in extremal graph theory, such as spanners, emulators, and distance preservers. At first glance these problems appear somewhat different because hopsets are trying to \textit{augment} the graph with new edges, while spanners and distance preservers are trying to \textit{sparsify} the graph by removing edges; but in very intriguing recent work, Kogan and Parter prove a strong connection between these problems by showing a black-box conversion from construction of shortcut sets / hopsets to construction of reachability/distance preservers (Theorem 1.1 in \cite{KP22focs}). 

In particular, using this new black box from \cite{KP22focs}, the $(\beta,\eps)$-hopset of Kogan and Parter from \cite{KP} implies a $(1+\eps)$-approximate distance preserver with $\otil(np^{2/5} + n^{2/3}p^{2/3})$ edges, where $p = |P|$ is the number of pairs. Applying the same black box from \cite{KP22focs}, our improved $(\beta,\eps)$-hopset immediately implies a $(1+\eps)$-approximate distance preserver with $\otil(np^{1/3} + n^{2/3}p^{2/3})$ edges. For $p \geq n$, our bound matches (up to log factors) the state-of-the-art sparsity of $O(n + n^{2/3}p^{2/3})$ of Abboud and Bodwin for the simpler problem of reachability preservers \cite{AbboudBP18}.

For more discussion of the definition of distance preservers, and of how our result fits into existing work, see Section \ref{sec:app_preservers} in the appendix. 

\paragraph{Techniques}
The main challenge of extending the linear-size $\otil(n^{1/3})$-shortcut set of Parter and Kogan to a $(\otil(n^{1/3},\eps))$-hopset is that their framework crucially relies on the graph $G$ being a DAG. For shortcut sets this can be assumed without loss of generality, since one can easily reduce the case of a general graph $G$ to that of a DAG: compute a shortcut set on the graph $G$ with all strongly connected components contracted (this is a DAG), and add to the shortcut set a bidirectional star for each strongly connected component ($2n$ edges total). 

For hopsets, however, this reduction does not hold. If the input graph $G$ happens to be a DAG, then the ideas of Kogan and Parter easily extend to a $(\otil(n^{1/3},\eps))$-hopset; but for general graphs, we must augment their basic framework with an entirely new set of techniques.

See Section \ref{sec_overview} for a more detailed overview of our techniques.

\subsection{Notation}
Let $G=(V,E)$ be a directed graph with positive weights. For two vertices $u,v\in V$, let $\dist_G(u,v)$ denote the distance from $u$ to $v$ in $G$. Let $G^*$ be the weighted transitive closure of $G$ i.e. each edge $(u,v)$ in $G^*$ has weight $\dist_G(u,v)$. 

Given a shortest path $P\subseteq G$, let $\len(P)$ denote the (weighted) length of $P$. Let $|P|$ be shorthand for $|V(P)|$. For a pair of vertices $u,v\in P$ where $u$ appears before $v$ on $P$, let $h_P(u,v)$ be the number of hops on $P$ from $u$ to $v$.



\section{Technical Overview}
\label{sec_overview}

Recall that \cite{KP} construct both a hopset (Definition \ref{def:hopset}) and a \emph{shortcut set} (Definition \ref{def:shortcut-set}). We use their constructions as a starting point. Recall that our hopset improves over their hopset, and matches the guarantee of their shortcut set (up to log factors). 

\paragraph{The Bottleneck of \cite{KP}.}

 First, we will outline the shortcut set and hopset constructions of \cite{KP}, and then we will identify the bottleneck of their hopset construction and the barrier towards extending it to match their shortcut set guarantee. 

The construction of the hopset of \cite{KP} is roughly as follows. Throughout the construction all edges $(u,v)$ added to the hopset $H$ have weight $\dist_G(u,v)$. First, they construct a collection of ``nice paths'' which are vertex-disjoint shortest paths on $\beta/12$ hops taken from the weighted transitive closure $G^*$ of the input graph (the nice paths also have some additional properties that we will not specify here). Then they add two different types of hopset edges: (1) hopset edges for each individual nice path, and (2) hopset edges from a random sample of vertices to a random sample of nice paths. Type (1) edges are where the bottleneck occurs, so we will focus on them. 

For type (1) edges, the goal is to add hopset edges to each nice path $P$ so that for any pair of vertices $x,y$ on $P$, there is a path on few hops from $x$ to $y$ of length $\dist(x,y)$. It is well known that one can add $O(|P|\log|P|)$ edges so that for all $x,y\in P$ where $x$ falls \emph{before} $y$ on $P$, there is a path on 2 hops of length $\dist(x,y)$ \cite{Ras10}. However, if $x$ falls \emph{after} $y$ on $P$, then this is no longer possible. 

This brings us to the previously mentioned major difference between shortcut sets and hopsets: for shortcut sets one can assume without loss of generality that the graph is a DAG. 
Thus, for shortcut sets, for each nice path $P$ and vertices $x,y\in P$ where $x$ falls after $y$ on $P$, we know that $x$ cannot reach $y$ since the graph is a DAG, and thus we need not construct a small-hop path in the ``backward direction'' from $x$ to $y$; thus, for shortcut sets, adding $O(|P|\log(|P|)$ edges per nice path $P$ is sufficient.  On the other hand, for hopsets we cannot assume that the graph is a DAG and so we are faced with the challenge of constructing small-hop paths for \emph{all} pairs of vertices on $P$, including those in the backward direction. 

To get around this challenge, \cite{KP} simply add a hopset edge between \emph{every} pair of vertices on $P$. This strategy of adding $\Theta(|P|^2)$ hopset edges per path rather than $O(|P|\log |P|)$ comes at a cost that is reflected in the final bound. In fact, if we were able to add only $\tilde{O}(|P|)$ hopset edges per path, then we would immediately obtain a bound for hopsets that matches the shortcut set bound of \cite{KP}. Moreover, since hopsets allow for $(1+\eps)$-approximate distances, it would even be sufficient to add $\tilde{O}(|P|)$ hopset edges so that every pair of vertices $x,y\in P$ has a small-hop path of length $(1+\eps')\dist(x,y)$ for sufficiently small $\eps'$. However, this is \emph{provably impossible}. Hesse \cite{Hesse03} proved that there exists a DAG with no shortcut set (let alone a hopset) of size $\tilde{O}(n)$ with subpolynomial hopbound (see \cref{app} for the details of why this implies impossibility of our goal of obtaining a hopset for each nice path). Due to this impossibility,
it is not sufficient to simply develop a more refined shortcut set for each nice path P; we will need to change the larger framework of \cite{KP} as well.

\paragraph{General Strategy.} Our strategy for overcoming the above impossibility involves two main contributions. First, when adding hopset edges to each nice path $P$ to get a small-hop path from $x$ to $y$, we settle for approximating $\dist(x,y)$ to a factor that is sometimes, but not always, worse than the previously mentioned desired approximation factor of $(1+\eps')$. In particular, our first main contribution is a subroutine for adding hopset edges to each nice path $P$ that yields an approximation bound for $\dist(x,y)$ that depends on several different parameters that measure certain properties of $x$, $y$, and $P$. For some pairs $x,y$ on some nice path $P$, our small-hop path from $x$ to $y$ will have good approximation (the ``easy'' case), and we can take the small-hop path directly. On the other hand, for some pairs $x,y \in P$ our small-hop path from $x$ to $y$ will yield a bad approximation (the ``hard'' case), but we will show such hard cases always impose additional structure on $x$, $y$, and $P$. 
Our second main contribution is a hierarchical sampling procedure whose purpose is to take advantage of this imposed structure of the hard case to obtain the final bound.

\paragraph{Contribution 1: Backward Shortcutting Subroutine.} 

We devise a ``backward shortcutting'' subroutine that yields for each nice path $P$ and each $x,y\in P$ a small-hop $xy$-path whose weighted length depends on several parameters. To understand these parameters, we need to introduce a couple of definitions. Let $G_{aug}$ be the graph augmented with the ``forward'' hopset edges for all of the nice paths i.e. for each path $P$ we add the set of $\tilde{O}(|P|)$ edges so that for all $x,y\in P$ where $x$ falls before $y$ on $P$, there is a path with 2 hops of length $\dist(x,y)$. The difficult case is thus when $x$ comes after $y$ on $P$. Now, we will define a canonical shortest path in $G_{aug}$ between each pair of vertices, that we call the \emph{road} (to distinguish it from the nice paths). For any pair $u,v$ of vertices, the \emph{road} $R(u,v)$ is a shortest $uv$-path in $G_{aug}$ with the fewest hops among all shortest $uv$-paths in $G_{aug}$. Observe that roads satisfy the standard substructure property: any subpath of a road is itself a road.

Now, we are ready to state the approximation bound of our backward shortcutting subroutine. The bound will depend on the following parameters:
\begin{itemize}[itemsep=0pt]
    \item $|P| = \tilde{\Theta}(\beta)$, which by construction will be exactly the same for each nice path,
    \item $\len(P)$: the weighted length of $P$,
    \item $h_P(y,x)$: the number of hops on $P$ from $y$ to $x$, and
    \item $|R(x,y) \cap P|$: the number of times $P$ intersects the road from $x$ to $y$.
\end{itemize}
The bound is as follows: after adding the hopset edges given by the backward shortcutting subroutine, for all $x,y\in P$ where $x$ appears after $y$ on $P$, there is an $xy$-path with at most 6 hops of length at most 

\[(1+\eps/2)\dist(x,y) + \frac{\eps\cdot\len(P)\cdot h_P(y,x)}{|P|\cdot|R(x,y) \cap P|}.\]

At first this bound appears quite difficult to parse, so we will give it some meaning. The first term $(1+\eps/2)\dist(x,y)$ is simply the error that we originally wished to achieve (for $\eps'=\eps/2$), but is in fact impossible. Thus, we will focus on the additional error introduced by the second term. The ratio $\len(P)/|P|$ is roughly the average weight of an edge on $P$. This is a weight normalization term that we will ignore for the sake of this overview. Finally, the ratio $h_P(y,x)/|R(x,y) \cap P|$ is the most conceptually important part of the bound. To understand this quantity, imagine that the subpath of $P$ from $y$ to $x$ hits the road $R(x,y)$, takes a detour, hits $R(x,y)$ again, takes another detour, and so on. The ratio $h_P(y,x)/|R(x,y) \cap P|$ is roughly the average number of hops on a detour. See \cref{fig:detour}.

\begin{figure}[h]
\centering
\includegraphics{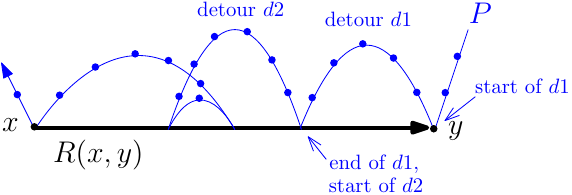}
\caption{Detours of the nice path $P$ with respect to $R(x,y)$. Note that we are in the difficult case where $x$ appears after $y$ on $P$.}
\label{fig:detour}
\end{figure}

Thus, roughly speaking, if $P$ takes small-hop detours with respect to $R(x,y)$, then the backward shortcutting subroutine provides a good approximation for $\dist(x,y)$ (the ``easy'' case mentioned in the general strategy above). On the other hand, if $P$ takes many-hop detours, then the backward shortcutting subroutine does not provide a good approximation (the ``hard'' case). Thus, we need a different technique to address the case of many-hop detours.

\paragraph{Contribution 2: Hierarchical Sampling for Many-Hop Detours.} We will take advantage of many-hop detours using the following simple fact: the more hops a detour has, the more likely we are to sample a vertex on the detour. 

Before understanding how our sampling procedure works, one first needs to understand an important difficulty that arises in the construction of our hopset that was not present in the hopset of \cite{KP}. Recall that the hopset of \cite{KP} includes edges from a random sample of vertices to a random sample of nice paths. The analysis of \cite{KP} is roughly as follows. Consider two arbitrary vertices $s,t$ and consider the road $R(s,t)$. Importantly, the road in \cite{KP} has a slightly different definition than our definition of the road because for them, $G_{aug}$ is the graph augmented with  \emph{all} $\Theta(|P|^2)$ hopset edges between all pairs of vertices in each nice path $P$, rather than just the hopset edges in the ``forward'' direction. Let $v_{first}$ be the sampled vertex on $R(s,t)$ that is closest to $s$, let $P_{last}$ be the sampled nice path that intersects $R(s,t)$ closest to $t$, and let $v_{last}$ be the vertex in $P_{last}\cap R(s,t)$ closest to $t$. See \cref{fig:kp}. They argue that (1) $v_{first}$ is few hops from $s$, and (2) $v_{last}$ is few hops from $t$.
They can thus construct a small-hop path from $s$ to $t$ as follows: follow $R(s,t)$ from $s$ to $v_{first}$, then take the hopset edge from $v_{first}$ to $P_{last}$, then take the hopset edges along $P_{last}$ to reach $v_{last}$, and finally follow $R(s,t)$ from $v_{last}$ to $t$. 

\begin{figure}[h]
\centering
\includegraphics{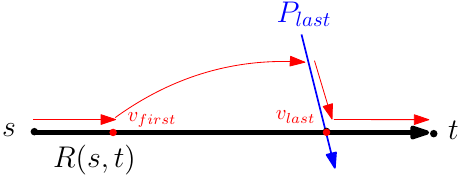}
\caption{The $st$-path in the hopset of \cite{KP}.}
\label{fig:kp}
\end{figure}

The difficulty for our hopset is (2) is no longer true. That is, using the above notation, we cannot guarantee that $v_{last}$ is few hops from t.
This stems from the fact that in the construction of \cite{KP}, having all $\Theta(|P|^2)$ hopset edges implies that a nice path can only intersect $R(s,t)$ at most twice, while in our construction, a nice path can intersect $R(s,t)$ \emph{arbitrarily many} times. Even though it is not true in our hopset that  $v_{last}$ is among the last vertices of $R(s,t)$,
 a trivial probability analysis shows that it \emph{is} true that $P_{last}$ is one of the last nice paths to intersect $R(s,t)$; specifically, if nice paths are sampled with probability $1/\beta$, then by definition of $P_{last}$ and $v_{last}$ there are (in expectation) at most ${\sim}\beta$ nice paths that intersect $R(v_{last}, t)$. 
 It is not clear a priori how this fact could be useful. We will take advantage of this fact in a somewhat nuanced way. 

Since sampling a nice path among the last ${\sim}\beta$ nice paths to intersect $R(s,t)$ does not help us directly, we would instead like to sample a \emph{larger} set of nice paths, so that $P_{last}$ will hit closer to the end of $R(s,t)$. However, since we use the strategy of \cite{KP} of adding hopset edges between the sampled vertices and the sampled paths, if we want to maintain the size of the hopset then increasing the number of sampled paths requires \emph{decreasing} the number of sampled vertices. However, we cannot afford to simply sample fewer vertices because this would (in expectation) increase the number of hops from $s$ to $v_{first}$.  

To overcome this issue, we sample a hierarchy of $\log n$ levels of both vertices and nice paths, where in higher levels we sample \emph{more} paths and \emph{fewer} vertices. Then we add hopset edges between the sampled vertices in each level to the sampled nice paths in the same level. Next, we will explain why this method of hierarchical sampling helps. 

See \cref{fig:path} for a depiction of the path we build from $s$ to $t$, which we describe next. First, just like in \cite{KP}, a vertex $v_{first}$ sampled at level 0 hits near the beginning of $R(s,t)$. We take the hopset edge from $v_{first}$ to a path sampled at level 0 that falls among the last ${\sim}\beta$ nice paths to hit $R(s,t)$. Our next goal is to find a small-hop path from here to a vertex sampled at level 1, which allows us to take a hopset edge to a path sampled on level 1. Because twice as many paths are sampled on level 1 than level 0, a path sampled on level 1 falls among the last $\sim\beta/2$ nice paths to hit $R(s,t)$, and so we have made progress towards $t$. We iterate this strategy: in general if we are currently at a path sampled on level $i$, our next goal is to find a small-hop path to a vertex sampled on level $i+1$, which allows us to take a hopset edge to a path sampled on level $i+1$, which falls among the last $\sim\beta/2^i$ nice paths to hit $R(s,t)$.

\begin{figure}[h]
\centering
\includegraphics{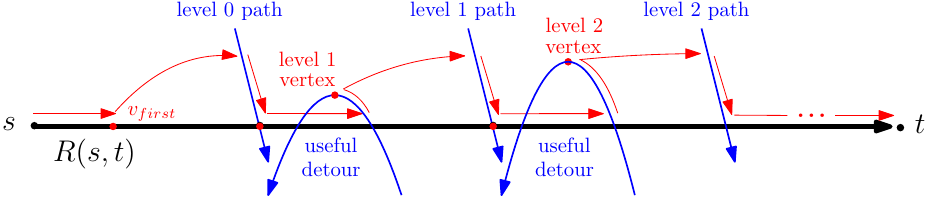}
\caption{The $st$-path in our hopset.}
\label{fig:path}
\end{figure}

Thus, the question is: how do we get from a nice path sampled on level $i$ to a vertex sampled on level $i+1$ via a small-hop path? Specifically, our starting point is the vertex $v_{last}$ defined as the last vertex on $R(s,t)$ that is on a nice path sampled on level $i$. The key idea is that because only ${\sim}\beta/2^i$ nice paths intersect $R(v_{last},t)$, the larger the level $i$ is, the more error we can afford to incur per such nice path. That is, the previously mentioned ``easy case'' (when the backward shortcutting subroutine provides a good approximation) has a higher error threshold when the level $i$ is larger. This means that in the ``hard case'' these ${\sim}\beta/2^i$ nice paths take detours of \emph{even more} hops when the level $i$ is larger.  

As previously mentioned, the benefit of having many-hop detours is that it is easier to sample a vertex on such a detour. However, there is a caveat: not all detours are useful to sample from. Because we need a \emph{small-hop} path to a vertex sampled on level $i+1$, the only useful detours are those that start only few hops from $v_{last}$. See \cref{fig:detour} for a depiction of the start of a detour. Such detours are useful because if there is a vertex $v$ sampled on level $i+1$ on such a detour, we can take few hops from $v_{last}$ to the start of the detour, and then take the hopset edges along the detour path to reach $v$ in few hops.

To handle the fact that we only care about useful detours, our analysis takes the form of iterative walk down $R(s,t)$, where at each step we compare the number of steps we have taken with the number of vertices on detours whose starting point we have reached. For instance, if we walk 10 steps down $R(s,t)$ and hit the starting point of detours whose combined number of hops is 100, then we have gained a sampling advantage because sampling any of these 100 vertices would be useful. Generally, we show that the larger the level $i$, the more sampling advantage we gain per step of our walk. That is, from above we know that larger $i$ means that we have many-hop detours (otherwise we would be in the ``easy case''), but we additionally show via a more fine-grained analysis, that larger $i$ also means that we have many-hop \emph{useful} detours. This allows us to conclude that for larger $i$, a smaller sample of vertices is needed in order to hit a vertex on a useful detour. This is why we can afford to sample fewer vertices at higher levels in our hierarchical sampling scheme.

\paragraph{Remaining Details.}

We have described the main ideas behind the construction, however there are several additional issues that need to be addressed in the full proof. For instance, we need to bound the (weighted) length of each nice path for two reasons: (1) we need to bound the length of each detour to obtain the desired distance approximation, and (2) we need to bound the term $\len(P)/|P|$ in the guarantee of the backward shortcutting subroutine. We bound the length of each nice path in a similar spirit to \cite{KP}: by only considering nice paths of bounded length and completely ignoring the rest of the nice paths. Specifically, when performing the analysis for a pair of vertices $s,t$, we only consider the ``relevant'' nice paths whose length is small enough as a function of $\dist(s,t)$.

Additionally, we need to bound the length of the segment of $R(s,t)$ covered by each detour to ensure the desired distance approximation. To do so, if some such segment of $R(s,t)$ has very large length we handle it in a brute-force manner; we can afford to do so because there can only be few segments of very large length.

\section{Proof of \cref{thm:main}}

We will spend almost the entire proof focusing on the first bound listed in \cref{thm:main} for $\beta\leq n^{1/3}$. The second bound (for $\beta>n^{1/3}$) follows from the first after applying a black-box transformation of \cite{KP}; See Section \ref{sec:large-beta} for details.

\subsection{Hopset Construction}

Throughout the construction all edges $(u,v)$ added to the hopset $H$ have weight $\dist_G(u,v)$. 

\subsubsection{Construction Part 1: Using \cite{KP} as a Starting Point}\label{sec:part1}

In this section, we will describe the first part of our construction, which comes directly from Kogan and Parter's construction \cite{KP}.
First we pick a \emph{nice path collection}. 


\paragraph{Nice Path Collection.}

\begin{definition}[Nice Path Collection \cite{KP}]

An collection $\mathcal{P}=\{P_1,\dots,P_k\}$ of vertex-disjoint paths is \emph{nice} if each $P_i$ satisfies the following properties:
\begin{enumerate}[label=(P\arabic*)]
    \item\label{prop1} $P_i\in G^*$ (the weighted transitive closure of $G$),
    \item\label{prop2} $P_i$ has exactly $\eps\beta/(162\log n)$ hops, 
    \item\label{prop3} $P_i$ is a shortest path in $G^*$,
    \item\label{prop4} $\len(P_i)$ is the shortest among all other $\eps\beta/(162\log n)$-hop shortest paths in $G^*\setminus \cup_{j<i}V(P_j)$,
\end{enumerate}
and $\mathcal{P}$ is maximal i.e. there is no path respecting properties \ref{prop1}-\ref{prop4} that can be added to $\mathcal{P}$.
\end{definition}
Our definition of a nice path collection is identical to that of \cite{KP} except for the value of the parameter in (P2) indicating the number of hops on each path.




For each path $P\in \mathcal{P}$, we add to our hopset $H$ a set $H(P)$ of $O(|P|\cdot\log|P|)$ edges such that for all vertices $u,v\in P$ where $u$ appears before $v$ on $P$, there is a path in $H(P)\cup P$ on only 2 hops whose length is $\dist_P(u,v)=\dist_G(u,v)$. These edges $H(P)$ can be computed by a known procedure (see e.g. \cite{Ras10}).


\paragraph{Vertex-Path Hopset Procedure.}
In \cite{KP}, they define a procedure for adding a set of hopset edges from any vertex $v$ to any path $P\in \mathcal{P}$. We will call this procedure the \emph{vertex-path hopset procedure}. The guarantee they obtain is the following: One can add a set $H(v,P)$ of $O(\log(nW)/\eps)$ edges so that for each vertex $p\in P$, there is a path in $H(v,P) \cup H(P) \cup P$ from $v$ to $p$ with at most 3 hops and length at most $(1+\eps/2)\dist(v,p)+\len(P)$ (see Eqn. 3.1 in \cite{KP}). 

We will use the vertex-path hopset procedure to add sets of hopset edges from vertices to paths in $\mathcal{P}$. However the specific vertices and paths that we choose will be different from those in \cite{KP}, and are defined in the next section.

\subsubsection{Construction Part 2: New Hopset Edges}\label{sec:part2}

In this section, we will introduce our new hopset edges that do not appear in $\cite{KP}$. They come in two types: (1) hopset edges between sampled vertices and sampled nice paths, and (2) hopset edges on each nice path. 

\paragraph{Vertex-Path Hopset Edges.} We define hierarchical levels of sampling, where at each level we sample a subset of vertices and a subset of the nice paths from $\mathcal{P}$ and add hopset edges between them. In contrast, \cite{KP} used only one level of sampling. This hierarchical sampling procedure and its analysis is one of our two main contributions, as outlined in the technical overview. For each level $i$ from 1 to $\log n$, we sample: \[\text{each vertex independently with probability } \Theta\Big(\frac{\log^2 n}{2^i\beta}\Big)\] \begin{center}and\end{center} \[\text{each nice path from $\mathcal{P}$ independently with probability } \Theta\Big(\frac{2^i\cdot \log^3 n}{\beta}\Big).\] These samples are nested so that the vertices sampled on level $i$ are a subset of those sampled on level $i-1$, and the paths sampled on level $i$ are a superset of those sampled on level $i-1$. 

We note that these sampling probabilities imply that with high probability the number of sampled vertices is $O(n\log^2 n/(2^i\beta))$, and 
the number of sampled nice paths is $O(2^i\cdot n\log^4 n/(\eps\beta^2))$ since the total number of nice paths is at most $|\mathcal{P}|=O(n\log n / (\eps\beta))$. (For the concentration bounds, recall that $\beta \leq n^{1/3}$, so $n/\beta^2 \gg \log(n)$.)

Using the aforementioned vertex-path hopset procedure, for each $i$ we add hopset edges from each sampled vertex on level $i$ to each sampled path on level $i$.

\paragraph{Hopset Edges for Nice Paths.} So far we have only added hopset edges to each nice path $P_i$ to ensure that there are small-hop shortest paths from each vertex $u$ to each vertex $v$ that appears \emph{after} $u$ on $P_i$. The purpose of our additional hopset edges is to create small-hop paths in the \emph{backward} direction (from $v$ to $u$).

We will define a \emph{backward shortcutting subroutine} that we will apply to each nice path. This subroutine is one of our two main contributions, as outlined in the technical overview.
To state the approximation guarantee of the subroutine, we need to define the following terminology.

\begin{notation}
Let $G_{aug}$ be the graph $G$ augmented with the previously defined hopset edges $H(P)$ for all $P\in \mathcal{P}$, that is, $G_{aug}=(V,E\cup (\cup_i H(P_i)))$.
\end{notation}

We now define a canonical shortest path between each pair of vertices, known as the \emph{road} (to distinguish it from the previously defined nice paths).

\begin{definition}[Road]
Given vertices $u,v$, the \emph{road} of $u,v$ denoted $R(u,v)$ is defined as a shortest $uv$-path in $G_{aug}$ with the fewest hops among all shortest $uv$-paths in $G_{aug}$. Ties are broken consistently so that if $w,x$ are vertices on $R(u,v)$ with $w$ before $x$, then $R(w,x)$ is a subpath of $R(u,v)$.
\end{definition}

Now we are ready to state the guarantee of our backward shortcutting subroutine. Recall that for a pair of vertices $u,v$ on a shortest path $P$ where $u$ appears before $v$ on $P$, $h_P(u,v)$ denotes the number of hops on $P$ from $u$ to $v$. 

\begin{restatable}[Backward Shortcutting Subroutine]{lemma}{lemback}\label{lem:back}
Given a nice path $P\in\mathcal{P}$ and $\gamma,\delta\in(0,1)$, it is possible to add a set of \[O\Big(\frac{|P| \cdot \log^2(nW)\log^3 n}{ \gamma\delta}\Big)\] hopset edges to $G_{aug}$ so that for all vertices $x,y\in P$ where x appears after y on $P$, the resulting graph has an $xy$-path with at most 6 hops and length at most 
\[(1+\gamma)\dist(x,y) + \frac{\delta\cdot\len(P)\cdot h_P(y,x)}{|P|\cdot|R(x,y) \cap P|}.\]
\end{restatable}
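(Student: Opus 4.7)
The construction relies on many invocations of the Kogan--Parter vertex-path hopset procedure (Eqn.~3.1 of \cite{KP}) arranged at a hierarchy of scales along $P$. Concretely, for each level $j = 0, 1, \ldots, \lceil \log |P| \rceil$ I would partition $P$ into consecutive subpaths of $2^j$ hops, and then select a collection of (source vertex $v$, target subpath $S$) pairs by a hierarchical sampling scheme. For each selected pair the KP procedure adds $O(\log(nW)/\gamma)$ edges, producing a 3-hop path from $v$ to every $p \in S$ of length at most $(1+\gamma/2)\dist(v,p) + \len(S)$. The sampling probabilities at the various levels are balanced so that the expected total number of KP invocations is $\tilde O(|P|\,\log^3(n)/\delta)$, matching the required edge bound after multiplying by the $O(\log(nW)/\gamma)$ cost of one invocation.

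\textbf{Analysis for a fixed backward pair.} Let $x, y \in P$ with $x$ after $y$, let $h = h_P(y,x)$, and enumerate the intersections of $R(x,y)$ with $P$ in road-order as $x = z_0, z_1, \ldots, z_r = y$, where $r = |R(x,y) \cap P|$. The $r$ detours of $P$ between consecutive $z_i$'s partition the $P$-segment from $y$ to $x$ into pieces whose hop-lengths sum to $h$ and whose weighted lengths sum to $\len(P_{y\to x}) \le \len(P)$. A pigeonhole/averaging argument yields an intersection $z_i$ and a scale $j^\star$ with $2^{j^\star} = \Theta(h/r)$ such that both the level-$j^\star$ subpath $S_1$ containing $z_i$ and the level-$j^\star$ subpath $S_2$ containing $y$ have weight at most $O(\delta \cdot \len(P) \cdot h/(|P| \cdot r))$. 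The sampling scheme is calibrated so that, with high probability over all pairs simultaneously, the procedure has been invoked with source $x$ and target $S_1$, and with source $z_i$ and target $S_2$. Concatenating the two resulting 3-hop paths yields a 6-hop path from $x$ to $y$ of length at most
\[
(1+\gamma/2)(\dist(x,z_i) + \dist(z_i,y)) + \len(S_1) + \len(S_2) \;\le\; (1+\gamma)\dist(x,y) + \frac{\delta \cdot \len(P) \cdot h}{|P|\cdot r},
\]
where we used the fact that $z_i$ lies on the shortest road $R(x,y)$ so $\dist(x,z_i) + \dist(z_i,y) = \dist(x,y)$, and absorbed small constants into $\delta$ and $\gamma$.

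\textbf{Main obstacle.} The delicate point is designing the hierarchical sampling so that, with a single random draw, the required pairs are present for \emph{every} backward pair $(x,y)$ simultaneously, with the matched scale $j^\star$ varying across pairs. At level $j$ we must sample enough (source, target) pairs that a union bound succeeds over all backward pairs for which $j^\star = j$, yet few enough that the total edge count over all $\log|P|$ levels remains $\tilde O(|P|/(\gamma\delta))$. I expect this to be the main technical step: carefully tracking the Chernoff and union-bound losses across $\log|P|$ levels, $\log(nW)$ error scales coming from KP, and $O(|P|^2)$ backward pairs is what produces the $\log^3 n \cdot \log^2(nW)$ factor in the claimed edge bound.
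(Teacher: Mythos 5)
Your high-level shape does match the paper's: route through an intermediate vertex $z$ lying on the road $R(x,y)$ so that $\dist(x,z)+\dist(z,y)=\dist(x,y)$, reach and leave it by $3$-hop gadget paths organized by hop-scales $2^j$ along $P$, and pay additive error proportional to the weight of a local piece of $P$. But two steps fail as stated. First, the pigeonhole claim is not provable and is in fact false in general: at scale $2^{j^\star}=\Theta(h/r)$ (with $h=h_P(y,x)$, $r=|R(x,y)\cap P|$) the \emph{average} weight of a scale-$j^\star$ subpath is already $\Theta(\len(P)h/(|P|r))$, so averaging cannot buy the extra factor of $\delta$; worse, the subpath $S_2$ containing $y$ is a single fixed object with no averaging freedom, and the weights on $P$ may be concentrated so that this one subpath carries almost all of $\len(P)$. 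Since your primitive pays additive error $\len(S_2)$, the error bound collapses on such instances. The paper does not avoid heavy intervals; it pays for them: its primitive (\cref{lem:weak}) has a tunable $\delta^*$, set to $\delta/2^{\ell(I)+i+3}$ where $2^{\ell(I)}$ measures how much heavier the interval is than the scale average, and a counting argument (at most a $2^{-\ell}$ fraction of intervals at a scale can be $2^{\ell}$ times heavier than average) shows these extra edges cost only one more $\log(nW)$ factor --- which is exactly the $\log^2(nW)$ versus $\log(nW)$ discrepancy in your edge accounting.

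Second, and more fundamentally, your construction needs the vertex-path procedure to have been invoked with source exactly $x$, and with source exactly $z_i$ paired with the specific target subpath containing $y$. But $x$ is an arbitrary vertex of $P$, not a sampled one, and which target subpath is needed depends on the query pair through $R(x,y)$; no calibration of randomly chosen (source, target) pairs can guarantee this simultaneously for all $\Theta(|P|^2)$ backward pairs without essentially taking all sources and all targets, i.e.\ $\Omega(|P|^2)$ invocations. This is precisely what the paper's \cref{lem:weak} is built to avoid: its guarantee requires only \emph{one} endpoint of the queried pair to lie in the designated set $S$, because the other (arbitrary) endpoint reaches the hub edges of $S$ for free via the forward $2$-hop shortcuts of $H(P)$ combined with geometric bucketing of distances inside each interval, and it works in both directions, whereas the Kogan--Parter primitive is one-directional. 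Then only the road vertex $s\in R(x,y)\cap P$ needs to be sampled; since the interval containing both $x$ and $y$ contains at least $|R(x,y)\cap P|-4$ road vertices (\cref{claim:outside}), a sample of rate roughly $\log n/|R(x,y)\cap P|$ suffices, and this sparser sampling for denser road intersections is what pays for the smaller $\delta^*$ needed there. Without a one-sided primitive of this kind (or a genuinely different idea), the calibration step you defer is not merely delicate --- it is where the plan breaks.
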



To complete the construction of our hopset, we apply the backward shortcutting subroutine to every nice path $P\in \mathcal{P}$ with $\gamma=\eps/2$ and $\delta=\eps$.

The remainder of the paper is divided into two sections. First, we use the backward shortcutting subroutine guarantee from \cref{lem:back} as a black box to prove \cref{thm:main}. After that, we prove \cref{lem:back}.

\subsection{Proof of \cref{thm:main} using Backward Shortcutting as a Black Box}

\subsubsection{Hopset Size Analysis} 

We will calculate the number of edges in our hopset $H$. Recall that \cref{thm:main} claims the following bound: $|H|=O(n^2 \log^7 n \log^2(nW)/(\eps^2 \beta^3))$.  There are three types of hopset edges: \begin{enumerate}[itemsep=0pt]
    \item those in $H(P)$ for some $P\in \mathcal{P}'$ (specified in \cref{sec:part1}),
    \item vertex-path hopset edges (specified in \cref{sec:part2}), and 
    \item hopset edges from the backward shortcutting subroutine (\cref{lem:back}).
\end{enumerate}

The number of hopset edges of type 1 for each path $P\in \mathcal{P}$ is $O(|P|\cdot\log|P|)$. Since paths in $\mathcal{P}$ are pairwise (vertex) disjoint, this is a total of $O(n\log n)$ edges.

Now, we calculate the number of hopset edges of type 2. For each level $i$, we add hopset edges from $O((n\log^2 n)/(2^i\beta))$ vertices to $O((2^i\cdot n\log^4 n)/(\eps\beta^2))$ paths (with high probability). From the guarantee of the vertex-path hopset procedure, with high probability the number of hopset edges for each vertex-path pair is $O(\log(nW)/\eps)$. Thus, with high probability the number of hopset edges added for each level $i$ is \[O\Big(\frac{n^2\log^6 n \log(nW)}{\eps^2 \beta^3}\Big).\] There are $\log n$ levels $i$, so we multiply by $\log n$ to get the total number of hopset edges of type 2.

Finally, we calculate the number of hopset edges of type 3. By the guarantee of the backward shortcutting subroutine (\cref{lem:back}), there are $O(|P| \cdot \log^2(nW)\log^3 n/\eps^2)$ hopset edges of type 3 for each $P\in \mathcal{P}'$. Since paths in $\mathcal{P}'$ are pairwise (vertex) disjoint, this is a total of $O(n\log^2(nW)\log^3 n/\eps^2)$. 

\subsubsection{Constructing the Path} Fix vertices $s,t\in V$. In this section we will construct a path from $s$ to $t$ in $G\cup H$ that we will later argue satisfies the hopbound and distance approximation bound claimed in \cref{thm:main}.

The analysis will focus on the road $R(s,t)$. If $R(s,t)$ already has at most $\beta$ hops then we are done, so assume otherwise. 


In a similar spirit to $\cite{KP}$, we will only consider a subset of the nice paths in our analysis, completely disregarding the rest. In particular, we define a \emph{relevant nice path} as follows.

\begin{definition}[Relevant Nice Path]
A nice path $P$ is \emph{relevant} if $\len(P)\leq \eps\dist(s,t)/(54\log n)$.
\end{definition} 

Let $Q$ be the path that we will build from $s$ to $t$. We will build $Q$ in phases. In the $i^{th}$ phase, for $i$ from 0 to $\log n$, we will build a subpath $Q_i$. The final path $Q$ will be the concatenation of the subpaths $Q_i$. 

To define each $Q_i$, We will first define vertices $s_0,\dots, s_{1+\log n}\in R(s,t)$. We will impose the property that each $Q_i$ starts at a vertex that is at least as far along $R(s,t)$ as $s_i$ (i.e. $Q_i$ starts at a vertex $u$ such that $u\in R(s,t)$ and $\dist(u,t)\leq \dist(s_i,t)$), and $Q_i$ ends at a vertex at least as far along $R(s,t)$ as $s_{i+1}$. The last vertex on $Q_i$ is the first vertex on $Q_{i+1}$. Define these vertices $s_i$ as follows.

\begin{notation}
Let $s_0=s$, let $s_{1+\log n}=t$, and let each remaining $s_i$ be the last vertex on $R(s,t)$ belonging to a relevant nice path that is sampled at level $i$. 
\end{notation}
Note that the $s_i$'s appear in order on $R(s,t)$ since the samples of nice paths are nested.

We first define $Q_i$ for $i=0$ and $i=\log n$. $Q_{\log n}$ is simply defined as $R(s_{\log n},t)$. Let $s'$ be the first vertex on $R(s,t)$ that is sampled on level 1. $Q_0$ is simply defined as $R(s,s')$ concatenated with the path from $s'$ to $s_1$ given by the vertex-path hopset procedure.

Now, we will define $Q_i$ for the remaining values of $i$. Fix $i\not=0,\log n$ and suppose we are at the beginning of phase $i$. That is, we are about to build a path $Q_i$ starting at a vertex at least as far along $R(s,t)$ as $s_i$.

We will define certain intervals of $R(s,t)$ as \emph{easy} or \emph{hard}. Intuitively, the easy intervals are those for which the backward shortcutting subroutine from \cref{lem:back} provides a good distance approximation. Formally, we define the easy and hard intervals as follows:

\begin{definition}[Easy/Hard Intervals]\label{def:easyhard}
Let $P$ be a relevant nice path, and let $x$ and $y$ be arbitrary vertices in $R(s,t)\cap P$ (if such vertices exist) where $x$ appears before $y$ on $R(s,t)$. We say that the interval $[x,y]$ of $R(s,t)$ is \emph{easy} if either $x$ falls before $y$ on $P$, or $h_P(y,x)/|R(x,y) \cap P| < 2^i$. Otherwise $[x,y]$ is a \emph{hard} interval of $R(s,t)$. 
\end{definition}

Starting at $s_i$, we will construct $Q_i$ by walking along $R(s,t)$ step-by-step, and at each step we will decide whether or not to take a path outside of $R(s,t)$ to shortcut directly to a vertex later on $R(s,t)$. Specifically, we perform the following loop.\\

\begin{breakablealgorithm}
\begin{algorithmic} 
\floatname{algorithm}{}
\caption{\hspace{-2mm}{\bf Algorithm for constructing $Q_i$}}
\begin{tcolorbox}[
  blanker,
  breakable
  ]
Starting from $s_i$, do the following until the phase terminates:
 \vspace{1mm}
 
   \hspace{2mm} 1. \hspace{1mm}Take a step forward on $R(s,t)$ and let $v$ be the vertex stepped to. 
   \vspace{1mm}
   
   \hspace{2mm}  2. \hspace{1mm}Append $v$ to $Q_i$. If $v=s_{i+1}$, then end the phase.
   \vspace{1mm}
   
   \hspace{2mm}  3. \hspace{1mm}Go through the following cases, returning to step 1 if a case finishes without terminating the phase:

\begin{itemize}
\item $ $
\vspace{-5mm}
    \item If $v=t$, terminate the algorithm.
    \item If $v$ falls on a relevant nice path $P$ and $v$ is the \emph{first} vertex on $Q_i$ that is on $P$, then do the following: Let $v_f$ be the final vertex on $R(s,t)$ that is on $P$. (That is, among all vertices on $R(s,t) \cap P$, $v_f$ is the one closest to $t$.)
    \begin{itemize}
        \item \hypertarget{s1}{(S1)} If $\dist(v,v_f) > \eps\dist(s,t)/(9 \log n)$, then we append a path from $v$ to $v_f$ onto $Q_i$ as follows. If $v$ appears before $v_f$ on $P$ then take the 2-hop path given by $H(P)$, and if $v$ appears after $v_f$ on $P$ then take the path given by applying the backward shortcutting subroutine (\cref{lem:back}) to $P$. If $v_f$ is at least as far along $R(s,t)$ as $s_{i+1}$, then end the phase; otherwise, go back to step 1 with $v_f$ as the current position of $Q_i$. 
        \item \hypertarget{s2}{(S2)} Otherwise, let $v_{easy}$ be the last vertex on $R(s,t)$ such that $v_{easy}$ is on $P$ and $[v,v_{easy}]$ is an easy interval. (The vertex $v_{easy}$ always exists since $v$ is a valid choice for $v_{easy}$). Append a path from $v$ to $v_{easy}$ onto $Q_i$ as follows. If $v$ appears before $v_{easy}$ on $P$, then take the 2-hop path from $v$ to $v_{easy}$ given by $H(P)$. If $v$ appears after $v_{easy}$ on $P$, then take the path from $v$ to $v_{easy}$ given by applying the backward shortcutting subroutine (\cref{lem:back}) to $P$. If $v_{easy}$ is at least as far along $R(s,t)$ as $s_{i+1}$, then end the phase;  otherwise, go back to step 1 with $v_{easy}$ as the current position of $Q_i$. 
    \end{itemize}
    \item If $v$ falls on a relevant nice path $P$ and $v$ is not the first vertex on $Q_i$ that is on $P$, then do the following: Let $v_0$ be the first vertex on $Q_i$ that is on $P$. 
    \begin{itemize}
      \item \hypertarget{s3}{(S3)} If $v$ falls before $v_0$ on $P$ and there is a vertex $u$ sampled on level $i+1$ that falls between $v$ and $v_0$ (inclusive) on $P$, then append a path from $v$ to $s_{i+1}$ onto $Q_i$ as follows. Take the 2-hop path from $v$ to $u$ on $P$ given by $H(P)$. From there, take the path from $u$ to $s_{i+1}$ given by the vertex-path hopset procedure. End the phase.
      \item Otherwise, do nothing and go back to step 1.
    \end{itemize}

\item Otherwise (if $v$ does not fall on a relevant nice path), do nothing and go back to step 1.
\end{itemize}

\end{tcolorbox}

\end{algorithmic}
\end{breakablealgorithm}


The following claim will be useful in the analysis:

\begin{claim}\label{claim:first}
    Let $z$ be the \emph{first} vertex on $Q_i$ (for some $i$) from a given relevant nice path $P$. If the loop is run on another vertex $z'\in Q_i\cap P$, then the loop is run on $z$.
\end{claim}
\begin{proof}
    When the loops is run on $z'$, no vertex on $Q_i$ has yet fallen into case (S3) because this case immediately ends the phase. The only vertices before $z'$ on $Q_i$ that we do not run the loop on, are vertices appended to $Q_i$ during the loop in case (S1) or (S2). In both cases, vertices appended during the loop are always on the same relevant nice path as the vertex we are executing the loop on. Thus, $z$ cannot be such an appended vertex since $z$ is the \emph{first} vertex on $Q_i$ that is on $P$.
\end{proof}


\subsubsection{Distance Approximation Analysis}




For each source of error, we will calculate both its multiplicative and additive error. We will argue that the multiplicative error for any segment of the path $Q$ is at most $1+\eps/2$, while the sum of the additive error over all sources is at most $\eps\dist(s,t)/2$.

First, we will analyze the distance error for $Q_0$, which is defined as the concatenation of (1) the path $R(s,s')$ which is an exact shortest path, and (2) the path from $s'$ to $s_1$ given by the vertex-path hopset procedure, which guarantees that $\len(Q_0)\leq (1+\eps/2)\dist(s',s_1)+\len(P)$ where $P$ is the relevant nice path containing $s_1$. By the definition of a relevant path, $\len(P)\leq \eps\dist(s,t)/(54\log n)$. That is, $Q_0$ has multiplicative error $1+\eps/2$ and additive error $\eps\dist(s,t)/(54\log n)$.

Next, we will analyze the distance error induced by each of the three types of shortcuts \hyperlink{s1}{(S1)}-\hyperlink{s3}{(S3)} that our path $Q$ takes. This is the only additional source of error since $Q_1,\dots, Q_{\log n}$ simply follow $R(s,t)$ when such a shortcut is not taken.

\paragraph[hi]{Distance Error from \hyperlink{s1}{(S1)}.}  If $v$ falls before $v_f$ on $P$, we use the 2-hop path from $v$ to $v_f$ given by $H(P)$, which preserves the exact distance from $v$ to $v_f$. On the other hand, if $v$ falls after $v_{easy}$ on $P$, then we use the backward shortcutting subroutine (\cref{lem:back}), which implies that the length of the subpath of $Q_i$ from $v$ to $v_f$ is at most $(1+\eps/2)\dist(v,v_f)+\eps\cdot \len(P)$. That is, the multiplicative error is $1+\eps/2$, and the additive error is at most $\eps\cdot \len(P)$. By the definition of a relevant path, $\len(P)\leq \eps\dist(s,t)/(54\log n)$, so the additive error is at most $\eps\cdot \len(P)\leq \eps^2\cdot\dist(s,t)/(54\log n)$. 

By the specification of \hyperlink{s1}{(S1)}, $\dist(v,v_f) > \eps\dist(s,t)/(9 \log n)$, so when going from $v$ to $v_f$ we progress more than an $\eps/(9 \log n)$ fraction down the road $R(s,t)$. This can only happen at most $(9 \log n)/\eps$ times in total throughout the entire process of building the path $Q$. Thus, the total amount of additive error from \hyperlink{s1}{(S1)} is at most $\eps\dist(s,t)/6$. 

\paragraph[hi]{Distance Error from \hyperlink{s2}{(S2)}.} If $v$ falls before $v_{easy}$ on $P$, we use the 2-hop path from $v$ to $v_{easy}$ given by $H(P)$, which preserves the exact distance from $v$ to $v_{easy}$. On the other hand, if $v$ falls after $v_{easy}$ on $P$, then since $[v,v_{easy}]$ is an easy interval, by \cref{def:easyhard} we know that \begin{equation}\label{eqn:easy}\frac{h_P(v_{easy},v)}{|R(v,v_{easy}) \cap P|} < 2^i. \end{equation}

By the guarantee of the backward shortcutting subroutine (\cref{lem:back}), the length of the subpath of $Q_i$ from $v$ to $v_{easy}$ is at most 
\[
    (1+\eps/2)\dist(v,v_{easy}) + \frac{\eps\cdot\len(P)\cdot h_P(v_{easy},v)}{|P|\cdot|R(v,v_{easy}) \cap P|}.\]
    We will simplify the second term (the additive error):
    \begin{align*}
    \frac{\eps\cdot\len(P)\cdot h_P(v_{easy},v)}{|P|\cdot|R(v,v_{easy}) \cap P|}&<  \frac{\eps\cdot\len(P)\cdot 2^i}{ |P|} &\text{by \cref{eqn:easy}}\\
    &= \frac{\eps\cdot\len(P)\cdot 2^i}{ \eps\beta/(162\log n)}&\text{by the construction of nice paths}\\
    &\leq\frac{3\eps\cdot 2^i}{\beta}\dist(s,t)&\text{by the definition of a relevant path.}\\
\end{align*}

Now we will calculate how many times we can fall into case \hyperlink{s2}{(S2)}. We fall into this case at most one per relevant nice path $P$ that intersects $R(s_i,t)$, since we only reach case \hyperlink{s2}{(S2)} if $v$ is the \emph{first} vertex on $Q_i$ that is on $P$. Thus, we will calculate how many relevant nice paths can intersect $R(s_i,t)$. 

\begin{claim}\label{claim:numpaths}
For any $i$ from 1 to $\log n$, with high probability the number of relevant nice paths intersecting $R(s_i,t)$ is at most $24\beta/( 2^i\log^2 n)$.
\end{claim}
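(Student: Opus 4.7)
The plan is to reduce the count of relevant nice paths intersecting $R(s_i,t)$ to a simple geometric-style random variable governed by the level-$i$ path sampling probability.

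First, I would order all relevant nice paths that intersect $R(s,t)$ by their last intersection with $R(s,t)$, traversed from the one closest to $t$ backwards: call them $P^{(1)}, P^{(2)}, \ldots, P^{(m)}$. This ordering is deterministic, since $R(s,t)$ and the nice path collection $\mathcal{P}$ are fixed before the hierarchical vertex/path sampling is performed.

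Second, I would characterize the set of relevant nice paths intersecting $R(s_i,t)$ in terms of this ordering. Let $j^*$ be the smallest index for which $P^{(j^*)}$ is sampled at level $i$. By the definition of $s_i$ as the last vertex of $R(s,t)$ lying on a level-$i$ sampled relevant nice path, $s_i$ is exactly the last intersection of $P^{(j^*)}$ with $R(s,t)$. I would then verify, using vertex-disjointness of nice paths, that the relevant nice paths intersecting $R(s_i,t)$ are exactly $P^{(1)},\ldots,P^{(j^*)}$: each $P^{(j)}$ with $j \leq j^*$ has its last intersection with $R(s,t)$ lying on the subpath $R(s_i,t)$, hence does intersect it; and each $P^{(j)}$ with $j > j^*$ has its last intersection strictly before $s_i$, and since nice paths are vertex-disjoint it cannot contain $s_i$ itself, so it does not meet $R(s_i,t)$ at all. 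Thus the quantity to bound is simply $j^*$.

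Third, I would carry out the probability calculation. Because each relevant nice path is independently sampled at level $i$ with probability $p_i = \Theta(2^i \log^3 n / \beta)$, the event $\{j^* > k\}$ is the event that $P^{(1)},\ldots,P^{(k)}$ are all unsampled, giving
\[
\Pr[j^* > k] \;\leq\; (1-p_i)^k \;\leq\; e^{-k p_i}.
\]
With $k = 24\beta/(2^i \log^2 n)$ this yields $k p_i = \Theta(\log n)$, so by tuning the hidden constant in $p_i$ the bound becomes $1/n^c$ for any desired constant $c$. A union bound over all pairs $(s,t) \in V \times V$ and all levels $i \in \{1,\ldots,\log n\}$ then gives the claim with high probability; the edge case in which no relevant nice path is sampled at level $i$ (so $s_i$ is undefined) has even smaller probability and is absorbed into the same bound.

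The main obstacle is really the deterministic structural step — verifying carefully, via vertex-disjointness and the chosen ordering, that the relevant nice paths intersecting $R(s_i,t)$ are precisely the prefix $P^{(1)},\ldots,P^{(j^*)}$. Once that is in place the claim collapses into a textbook geometric-tail calculation on independent Bernoullis.
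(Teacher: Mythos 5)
Your proposal is correct and follows essentially the same route as the paper: the paper's (one-paragraph) proof likewise observes that, by the definition of $s_i$ and the level-$i$ path sampling probability $\Theta(2^i\log^3 n/\beta)$, the number of relevant nice paths hitting the suffix $R(s_i,t)$ is a geometric-type quantity that is at most $24\beta/(2^i\log^2 n)$ with high probability after tuning the constant in the sampling rate. Your write-up simply makes explicit the ordering by last intersection, the vertex-disjointness argument, and the union bound that the paper leaves implicit.
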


\begin{proof}

The vertex $s_i$ is defined as the last vertex on $R(s,t)$ belonging to a relevant nice path that is sampled at level $i$. Since each nice path (and thus, each relevant nice path) is sampled with probability $\Theta(2^i\log^3 n/\beta)$, with high probability the number of relevant nice paths intersecting the suffix $R(s_i,t)$ of $R(s,t)$ is at most $24\beta/( 2^i\log^2 n)$ (where the asymptotic notation is removed by adjusting the constant within the sampling probability) . 
\end{proof}

Because \cref{claim:numpaths} holds with high probability and it is easy to check that we only apply it a polynomial number of times, for the remainder of the proof we will assume that it holds deterministically.

By \cref{claim:numpaths}, the total additive error incurred during phase $i$ is at most \[\frac{3\eps\cdot 2^i}{\beta}\dist(s,t)\cdot \frac{24\beta}{ 2^i\log^2 n}=\frac{72\eps}{\log^2 n}\dist(s,t).\]

Taking the sum over all $\log n$ phases, the total additive error is 
\[\frac{72\eps}{\log n}\dist(s,t).\]

\paragraph[hi]{Distance Error from \hyperlink{s3}{(S3)}.} To get from $v$ to $s_{i+1}$, $Q_i$ first takes a shortest path from $v$ to $u$ (using $H(P)$), and then takes a path from $u$ to $s_{i+1}$ given by the vertex-path hopset procedure, which guarantees length at most $(1+\eps/2)\dist(u,s_{i+1})+\len(P)$. By the triangle inequality, $\dist(u,s_{i+1})\leq \dist(u,v_0)+\dist(v_0,v) + \dist(v,s_{i+1})$, so the total length of the subpath of $Q_i$ between $v$ and $s_{i+1}$ is at most \begin{align*}
 & \dist(v,u)+(1+\eps/2)(\dist(u,v_0)+\dist(v_0,v) + \dist(v,s_{i+1}))+\len(P)& \\
  \leq &(1+\eps/2)\dist(v,s_{i+1})+(2+\eps/2)\len(P) + (1+\eps/2)\dist(v_0,v) &\text{since $\dist(v,u)+\dist(u,v_0)\leq \len(P)$}\\
  \leq & (1+\eps/2)\dist(v,s_{i+1})+\eps  \dist(s,t)/(20\log n) + (3/2)\dist(v_0,v)&\text{by the definition of a relevant path.}\\
\leq & (1+\eps/2)\dist(v,s_{i+1})+\dist(s,t)\cdot(\eps /(20\log n) + \eps/(6 \log n)), &
\end{align*}
where the last inequality is because since we didn't take case (S1) while executing the loop on $v_0$ (and we executed the loop on $v_0$ by \cref{claim:first}).

That is, the segment of $Q_i$ from $v_0$ to $s_{i+1}$ has multiplicative error $(1+\eps/2)$ and additive error at most $\eps \dist(s,t)/(4\log n)$. 

This error occurs once per phase. Taking the sum over all $\log n$ phases, the additive error is at most $\eps\dist(s,t)/4$.
\\

Finally, we take the the sum over all sources of additive error: 
\[\eps\dist(s,t)\cdot\big(1/(54\log n)+1/6+72/\log n+1/4\big)\]\[\leq \eps\dist(s,t)/2 \text{ \hspace{3mm}for sufficiently large $n$.}\]

\subsubsection{Hopbound Analysis}

Our goal is to show that $|Q|\leq \beta$. 
\paragraph{Hops on Shortcut Paths.}

First, we will count the vertices on $Q$ that are on the shortcut paths specified by cases \hyperlink{s1}{(S1)}-\hyperlink{s3}{(S3)}. Case \hyperlink{s3}{(S3)} incurs a total of at most $3\log n$ hops because it is executed only once per phase, each time incurring only 3 hops by the guarantee of the vertex-path hopset procedure. 

For cases \hyperlink{s1}{(S1)} and \hyperlink{s2}{(S2)}, for each phase $i$, we only enter these cases once per relevant nice path that intersects $R(s_i,t)$. By \cref{claim:numpaths}, the number of relevant nice paths intersecting $R(s_i,t)$ is at most $24\beta/(2^i\log^2 n)$. Taking the sum over all $\log n$ phases, we enter cases \hyperlink{s1}{(S1)} and \hyperlink{s2}{(S2)} a total of at most $24\beta/\log n$ times. Each time we invoke \hyperlink{s1}{(S1)} or \hyperlink{s2}{(S2)}, we either take a $\leq 2$-hop path from $H(P)$ or a $\leq 6$-hop from the backward shortcutting subroutine (\cref{lem:back}), for a total of $144\beta/\log n$ hops.


\paragraph{Hops Not on Relevant Paths.}

We will count the number of vertices on $Q\cap R(s,t)$ that are not on any relevant nice path.



\begin{claim}\label{claim:nopath}
The total number of vertices on $R(s,t)$ that are not on any relevant nice path is at most $\beta/3$.
\end{claim}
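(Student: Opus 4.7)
}

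The plan is to bound $V_{\mathrm{notrel}}$, the set of vertices of $R(s,t)$ that lie on no relevant nice path, via a weight-charging argument that combines the greedy property \ref{prop4} of the nice-path collection with the total weight of $R(s,t)$. The idea is: any sufficiently long stretch of $V_{\mathrm{notrel}}$ along $R(s,t)$ would qualify as a valid candidate to have been added to $\mathcal{P}$ after all relevant nice paths were chosen, and such a candidate must have weight at least that of the first non-relevant nice path in $\mathcal{P}$.

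First I would partition $V_{\mathrm{notrel}}$ into maximal contiguous runs along $R(s,t)$. The key structural observation is that each such run is a shortest path in $G^*$ on vertices disjoint from $\bigcup_{P \text{ relevant}} V(P)$. Any edge of a run must be either an original $G$-edge or an $H(P)$-edge for some non-relevant $P$, because a shortcut edge from $H(P')$ for a relevant $P'$ would force both endpoints to lie on $P'$, contradicting the vertex selection of the run. All such edges belong to $G^*$, and since each run is a sub-path of $R(s,t)$, which is a shortest path in $G_{aug}$, the tie-breaking convention makes each run a minimum-hop shortest path between its endpoints; in particular it is a shortest path in $G^*$ with the same hop count.

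Next, let $k := \eps\beta/(24\log n)$ be the common hop-length of nice paths, and let $P_1,\ldots,P_r$ be the relevant nice paths in the order chosen by the greedy construction (they come first by length). Consider the moment the first non-relevant nice path $P_{r+1}$ was added. By \ref{prop4}, $\len(P_{r+1})$ is the minimum length among all $k$-hop shortest paths in $G^* \setminus \bigcup_{j\le r} V(P_j)$, and by non-relevance $\len(P_{r+1}) > \eps \dist(s,t)/(8\log n)$. Hence every $k$-hop shortest path in $G^* \setminus \bigcup_{j\le r} V(P_j)$ has weight at least $\eps\dist(s,t)/(8\log n)$. (If no non-relevant nice path exists at all, maximality of $\mathcal{P}$ gives the stronger conclusion that no $k$-hop shortest path exists in this set.) I would then tile each run with consecutive disjoint $k$-hop sub-segments. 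Every sub-segment inherits the ``avoids relevant nice paths'' property of its run, so it is a $k$-hop shortest path in $G^* \setminus \bigcup_{j\le r} V(P_j)$, carrying weight $> \eps\dist(s,t)/(8\log n)$. Since sub-segments across all runs are mutually disjoint sub-paths of $R(s,t)$, their total weight is at most $\len(R(s,t)) = \dist_G(s,t)$, so at most $8\log n/\eps$ of them exist, accounting for at most $(8\log n/\eps)\cdot k = \beta/3$ hops.

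The main obstacle I expect is absorbing the remainder hops — fewer than $k$ hops per run — into the final $\beta/3$ bound, rather than into a bound of the form $\beta/3 + R\cdot k$ where $R$ is the number of runs. I plan to control this either by bounding $R$ using the fact that consecutive runs must be separated by a vertex of a relevant nice path on $R(s,t)$ (and only short runs contribute residue), or by re-tiling so that each run contributes zero remainder at the cost of a slightly smaller constant in the definition of $k$, giving a slightly weaker per-segment weight bound that still yields the desired $\beta/3$. The weight-charging backbone is clean; only the bookkeeping of these remainders is delicate.
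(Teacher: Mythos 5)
Your greedy/eligibility backbone is the right one (it is the same charging the paper uses, in contrapositive form: a too-cheap $\eps\beta/(24\log n)$-hop candidate avoiding all relevant nice paths would contradict \ref{prop4} via the first non-relevant path in $\mathcal{P}$), but the decomposition into \emph{maximal contiguous runs} of $V_{\mathrm{notrel}}$ along $R(s,t)$ creates a genuine gap, not just delicate bookkeeping. A run with fewer than $k=\eps\beta/(24\log n)$ hops contains no $k$-hop subpath at all, so the tiling charges it nothing; in the worst case every non-relevant vertex sits in such a short run (relevant nice paths can intersect $R(s,t)$ at arbitrarily many interleaved vertices, so the number of runs $R$ is not controllable — \cref{claim:numpaths} only limits how many relevant paths hit the suffixes $R(s_i,t)$ for $i\geq 1$, and even one relevant path can cut the road into many runs). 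Neither of your proposed repairs closes this: bounding $R$ fails for the reason just given, and ``re-tiling with a smaller $k$'' is not available because property \ref{prop2} forces candidate paths to have \emph{exactly} $\eps\beta/(24\log n)$ hops, a parameter fixed by the construction of $\mathcal{P}$, so short runs can never host an eligible candidate.

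The missing idea is that contiguity on $R(s,t)$ is unnecessary: since candidates live in the transitive closure $G^*$, you may take \emph{all} vertices of $V_{\mathrm{notrel}}$ in their order along $R(s,t)$ and join consecutive ones by $G^*$ edges, skipping over the relevant-path vertices in between. Because these vertices appear in order on the road and $G_{aug}$ distances agree with $G$ distances, the resulting single path $P_S$ is a shortest path in $G^*$ with more than $\beta/3$ hops (under the assumption for contradiction), disjoint from every relevant nice path. One averaging step over its $k$-hop subpaths (total length at most $\dist(s,t)$) then produces an eligible subpath of length at most $\eps\dist(s,t)/(8\log n)$, contradicting \ref{prop4} at the step where the first non-relevant nice path was chosen. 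This is exactly the paper's argument, and it eliminates the residue problem entirely because there is only one path to tile.
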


\begin{proof}
Suppose for contradiction that there is a set $S$ of more than $\beta/3$ vertices on $R(s,t)$ that are not contained in any relevant nice path. Consider the path $P_S$ in $G^*$ on the vertices of $S$ in the order that they appear on $R(s,t)$. $P_S$ is a shortest path by the definition of $R(s,t)$. Thus, any subpath of $P_S$ on exactly $\eps\beta/(162\log n)$ hops satisfies properties \ref{prop1}-\ref{prop3} of the nice path collection and is thus eligible to be added to the nice path collection.

We claim that there exists a subpath $P'$ of $P_S$ on $\eps\beta/(162\log n)$ hops of length at most $\eps\dist(s,t)/(54\log n)$. Indeed, if every subpath of $P_S$ on $\eps\beta/(162\log n)$ hops had length more than $\eps\dist(s,t)/(54\log n)$, then since $P_S$ has $\beta/3$ hops, $\len(P_S)$ would be more than \[\frac{\beta / 3}{\eps\beta/(162\log n)}\cdot \frac{\eps\dist(s,t)}{54\log n}=\dist(s,t),\] but we know that $\len(P_S)$ is in fact at most $\dist(s,t)$. 

Let $j$ be the index such that the nice path $P_j$ is relevant while $P_{j+1}$ is not. (Note that there is only one such index $j$ since the nice path collection is ordered by length.) That is,  $\len(P_{j+1})> \eps \dist(s,t)/(54\log n)$. By property \ref{prop4} of the nice path collection, $\len(P_{j+1})$ is the shortest among all paths that are eligible to be added to the nice path collection $\{P_1,\dots,P_j\}$. However, this is a contradiction because $P'$ is eligible and has length less than that of $P_{j+1}$.
\end{proof}

\paragraph{Hops at Beginning and End.}

We will count the vertices on $Q_0$ and $Q_{\log n}$. $Q_{\log n}$ is defined as $R(s_{\log n},t)$.
By \cref{claim:numpaths}, the number of relevant nice paths intersecting $R(s_{\log n},t)$ is at most $24\beta/( n\log^2 n)<1$. Thus, all vertices on $R(s_{\log n},t)$ are not on any relevant nice path, so the hops on $Q_{\log n}$ are subsumed into the $\beta/3$ hops from \cref{claim:nopath}.

$Q_0$ is defined as $R(s,s')$ concatenated with the path from $s'$ to $s_1$ on at most 3 hops given by the vertex-path hopset procedure. The vertex $s'$ is defined as the first vertex on $R(s,t)$ that is sampled on level 1. 
Since vertices on level 1 are sampled with probability $\Theta(\log^2 n/\beta)$, with high probability $|R(s,s')|\leq \beta/\log n$ (for a sufficiently large constant in the asymptotic notation for the sample size).

\paragraph{Hops on ``Hard Vertices''.}

It remains to count vertices $v$ that fall on a relevant nice path, such that we execute the loop on $v$, but do not take a shortcut path via cases \hyperlink{s1}{(S1)}-\hyperlink{s3}{(S3)}. Call such vertices \emph{hard vertices}. 

Fix a phase $i$ between 1 and $\log n - 1$. Let $\mathcal{P}'=\{P'_1,\dots\,P'_k\}$ be the set of relevant nice paths that intersect $Q_i$ (in no particular order). For each $j$ from 1 to $k$, let $v_j$ be the last vertex on $Q_i$ such that $v_j\in P'_j$ and $v_j$ is a hard vertex (if such a vertex $v_j$ exists).
Let $v_{j_0}$ be the first vertex on $Q_i$ that is on $P'_j$. (Note that $v_j$ is required to be hard, but $v_{j_0}$ is not.) 

By \cref{claim:first}, we executed the loop on $v_{j_0}$. If we took case \hyperlink{s1}{(S1)} while executing the loop on $v_{j_0}$, then we would jump straight to the last vertex on $R(s,t)$ that is on $P'_j$, and we would never again execute the loop on a vertex on $P'_j$ in this phase, a contradiction. (Note that even if $v_j$ is the vertex that we jumped to, we still would not execute the loop on $v_j$ since at the very beginning of the loop, we take a step forward on $R(s,t)$.) Thus, we took case \hyperlink{s2}{(S2)} while executing the loop on $v_{j_0}$, so we jumped to the last vertex $v_{easy}$ on $R(s,t)$ such that $[v_{j_0},v_{easy}]$ is an easy interval (recall \cref{def:easyhard} for easy/hard intervals). Thus, we know that $v_j$ falls after $v_{easy}$ on $R(s,t)$, and so $[v_{j_0},v_j]$ is a hard interval. Thus, \[h_{P'_j}(v_j,v_{j_0}) \geq 2^i\cdot |R(v_{j_0},v_j) \cap P'_j|.\]
Taking the sum over all $j$ on both sides:
\begin{equation}\label{eqn:sum}\sum_{j=1}^{k} h_{P'_j}(v_j,v_{j_0}) \geq 2^i\cdot \sum_{j=1}^k|R(v_{j_0},v_j) \cap P'_j|.\end{equation}

Call a vertex a \emph{sample candidate} if it falls between $v_j$ and $v_{j_0}$ (inclusive) on $P'_j$ for some $j$. For all $j$, none of the sample candidates on $P'_j$ are sampled on level $i+1$, because if one were then we would have taken a shortcut path in case \hyperlink{s3}{(S3)} and ended the phase. By definition, the number of sample candidates on $P'_j$ is $h_{P'_j}(v_j,v_{j_0})+1$. Since all of the nice paths are disjoint, the total number of sample candidates is at least $\sum_{j=1}^{k} h_{P'_j}(v_j,v_{j_0})$. 

Because none of the sample candidates are sampled on level $i+1$, and we sample vertices with probability $\Theta(\log^2 n/(2^i\beta))$ on level $i+1$, with high probability the total number of sample candidates is at most $2^{i-1}\beta/\log n$ (where the asymptotic notation is removed by adjusting the constant in the sampling probability). Since this bound on the number of sample candidates holds with high probability and we will only apply it a polynomial number of times, for the remainder of the argument we will assume that it holds deterministically. Thus, we have:
\[\sum_{j=1}^{k} h_{P'_j}(v_j,v_{j_0})\leq 2^{i-1}\beta/\log n.\]
Combining this with \cref{eqn:sum}:
\[\frac{\beta}{2\log n}\geq \sum_{j=1}^k|R(v_{j_0},v_j) \cap P'_j|.\]

Note that every hard vertex on $Q_i$ is contained in $|R(v_{j_0},v_j) \cap P'_j|$ for some $j$. This means that the number of hard vertices on $Q_i$ is at most $\beta/(2\log n)$. Taking the sum over all $\log n$ phases, the total number of hard vertices on $Q$ is at most $\beta/2$.
\\

Putting everything together, the total number of hops on $Q$ is:

\[3\log n+144\beta/\log n+\beta/3+\beta/\log n +3+\beta/2\]\[\leq\beta.\]

\subsection{Proof of Backward Shortcutting Subroutine}

The goal of this section is to prove \cref{lem:back}:
\lemback*

\subsubsection{Weaker Bound} Before proving \cref{lem:back}, we will first prove a claim that we will apply as a black box to prove \cref{lem:back}. The claim has a weaker approximation bound than \cref{lem:back} and concerns distances between pairs of vertices when at least one of the two vertices comes from a prespecified set $S$.

\begin{claim}\label{lem:weak}
Given any segment $P$ of a nice path, a subset $S\subseteq V(P)$ of vertices, and $\gamma,\delta\in (0,1)$, it is possible to add a set of  \[O\Big(\frac{|S|\cdot\log (nW)} {\gamma\delta}\Big)\] hopset edges to $G_{aug}$ so that for all vertices $x,y\in P$ such that $x$ appears after $y$ on $P$ and at least one of $x$ or $y$ is in $S$, the resulting graph has  an $xy$-path with 3 hops and length at most
\[(1+\gamma)\dist(x,y) + \delta\cdot\len(P).\]
\end{claim}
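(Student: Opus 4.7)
The plan is to prove Claim \ref{lem:weak} via a two-dimensional discretization: a geometric partition of \emph{distance scales} and a uniform partition of \emph{positions along $P$}. First I would partition the weighted length $\len(P)$ into $k=\lceil 1/\delta\rceil$ contiguous sub-intervals of length at most $\delta\cdot\len(P)$ each, and assign every vertex of $P$ to the interval containing its $P$-offset from the start; call these vertex subsets $I_1,\ldots,I_k$. I would also set $D_i=(1+\gamma)^i$ for $i=0,1,\ldots,\lceil\log_{1+\gamma}(nW)\rceil$, giving $O(\log(nW)/\gamma)$ distance scales that cover every possible distance in $G$. For each triple $(s,D_i,I_j)$ with $s\in S$ I add up to two hopset edges, each of weight equal to the true $\dist_G$ value between its endpoints: (i) if some $w\in I_j$ satisfies $\dist_G(s,w)\le D_i$, let $w^{\mathrm{out}}$ be the \emph{earliest} such vertex along $P$ and add $(s,w^{\mathrm{out}})$; (ii) if some $w\in I_j$ satisfies $\dist_G(w,s)\le D_i$, let $w^{\mathrm{in}}$ be the \emph{latest} such vertex along $P$ and add $(w^{\mathrm{in}},s)$. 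Summing over $s$, $i$, $j$ yields the required $O(|S|\log(nW)/(\gamma\delta))$ bound on hopset size.

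For correctness, take $x,y\in P$ with $x$ after $y$ on $P$ and assume $x=s\in S$ (the case $y=s\in S$ is entirely symmetric, using the ``in'' edges). Let $d=\dist_G(x,y)$, let $i$ be minimal with $D_i\ge d$ so that $D_i\le (1+\gamma)d$, and let $I_j$ be the interval containing $y$. Then $y$ itself certifies that $w^{\mathrm{out}}=w^{\mathrm{out}}(s,i,j)$ exists, and since we chose the \emph{earliest} qualifying vertex of $I_j$, $w^{\mathrm{out}}$ lies at or before $y$ on $P$. Because $w^{\mathrm{out}}$ precedes $y$ on the parent nice path, the pre-existing forward hopset $H(P)$ already provides a 2-hop $(w^{\mathrm{out}}\to y)$-path inside $G_{aug}$ of weight $\dist_G(w^{\mathrm{out}},y)=\dist_P(w^{\mathrm{out}},y)\le \delta\cdot\len(P)$, where the last inequality holds since $w^{\mathrm{out}}$ and $y$ both lie in $I_j$. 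Concatenating this with the new edge $(s,w^{\mathrm{out}})$ yields a 3-hop $x\to y$ path of total weight at most $D_i+\delta\cdot\len(P)\le (1+\gamma)d+\delta\cdot\len(P)$, as required.

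The key conceptual observation is that the ``earliest/latest within interval'' tiebreak is exactly what forces $w^{\mathrm{out}}$ (resp.\ $w^{\mathrm{in}}$) onto the correct side of $y$ (resp.\ $x$), so that the existing forward hopset $H(P)$ closes the path in 2 hops; this is what keeps the hop count at 3 rather than 4 and avoids an extra intermediate edge. I do not expect a serious technical obstacle beyond handling the boundary of the smallest scale and the case $d=0$ (which is trivial because $x=y$ under integer weights $\ge 1$), and confirming that a length-based interval partition still covers every vertex of $P$ even when $P$ contains single edges heavier than $\delta\cdot\len(P)$ (singletons suffice there and still total $O(1/\delta)$ non-empty intervals). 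This weaker 3-hop statement then serves as the workhorse feeding into the inductive construction that gives the full 6-hop guarantee of Lemma \ref{lem:back}.
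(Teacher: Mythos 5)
Your proposal is correct and follows essentially the same construction as the paper: partition $P$ into $O(1/\delta)$ intervals of weighted length at most $\delta\cdot\len(P)$, use $O(\log(nW)/\gamma)$ geometric distance scales, add per $(s,\text{scale},\text{interval})$ an outgoing edge to the earliest qualifying vertex and an incoming edge from the latest one, and close the path with the 2-hop forward hopset $H(P)$. The only (immaterial) difference is that you use distance thresholds $\dist\le D_i$ where the paper uses distance bands $[(1+\gamma)^j,(1+\gamma)^{j+1}]$; both give the same size bound and the same $(1+\gamma)\dist(x,y)+\delta\cdot\len(P)$ guarantee in 3 hops.
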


\begin{proof}We begin by defining the construction of hopset edges.
\paragraph{Construction.} Divide $P$ into $O(1/\delta)$ intervals $I_1,\dots,I_{O(1/\delta)}$ of (weighted) length at most $\delta\cdot\len(P)$ each. For each vertex $v\in S$, and each interval $I_i$, do the following. For all $j$ from 0 to $\log_{1+\gamma}(nW)$, add the following two hopset edges: \begin{itemize} \item the edge $(v,w)$ where $w$ is the first vertex on $I_i$ such that $\dist(v,w)\in[(1+\gamma)^j, (1+\gamma)^{j+1}]$ \item the edge $(w',v)$ where $w'$ is the last vertex on $I_i$ such that $\dist(w',v)\in [(1+\gamma)^j, (1+\gamma)^{j+1}]$.
\end{itemize}

\paragraph{Size of Hopset.} For each of $S$ vertices and each of $1/\delta$ intervals, we add $\log_{1+\gamma} \len(P)$ edges. Thus, the total number of edges is $|S|\cdot\log (nW) / (\gamma\cdot\delta)$.

\paragraph{Hopbound and Approximation Bound.} Fix vertices $x,y$. First, suppose $x \in S$. Let $i$ be such that $y$ falls into interval $I_i$. Let $j$ be such that $d(x,y) \in [(1+\gamma)^j, (1+\gamma)^{j+1}]$. Then we added an edge from $x$ to the first vertex $w$ on $I_i$ such that $d(x,w)  \in [(1+\gamma)^j, (1+\gamma)^{j+1}]$. By choice of $w$, we know that $w$ appears before $y$ on $P$. Therefore, there is a path $P'$ in $G_{aug}$ from $w$ to $y$ with 2 hops and distance exactly $\dist(w,y)\leq \delta\cdot \len(P)$ since $w$ and $y$ are in the same interval $I_i$. Consider the path formed by taking the edge from $x$ to $w$, and then taking the path $P'$ from $w$ to $y$. This path has 3 hops and distance at most \begin{align*}
    \dist(x,w)+\dist(w,y) &\leq (1+\gamma)^{j+1} + \delta\cdot \len(P)\\ &\leq  (1+\gamma)\dist(x,y) + \delta\cdot \len(P).
\end{align*}

The remaining case when $y \in S$ is symmetric to the case when $x \in S$ but we describe it explicitly for completeness. Let $i$ be such that $x$ falls into interval $I_i$. Let $j$ be such that $d(x,y) \in [(1+\gamma)^j, (1+\gamma)^{j+1}]$. Then we added an edge from the latest vertex $w'$ on $I_i$ such that $d(w',y)  \in [(1+\gamma)^j, (1+\gamma)^{j+1}]$. By choice of $w'$, we know that $w'$ appears after $x$ on $P$. Therefore, there is a path $P'$ from $x$ to $w'$ with 2 hops and distance exactly $\dist(x,w')$. Consider the path formed by taking the path $P'$ from $x$ to $w'$, followed by the edge from $w'$ to $y$. This path has 3 hops and distance at most $\dist(x,w')+\dist(w',y) <= \delta\cdot\len(P) + (1+\gamma)^{j+1} <=  (1+\gamma)\dist(x,y) + \delta\cdot\len(P)$.
\end{proof}

\subsubsection{Final Bound}

\paragraph{Construction Idea.} First, we will divide $P$ into intervals with an equal number of vertices per interval.  Then, we will add hopset edges to each interval individually. The intervals with larger than average (weighted) length require more hopset edges to achieve the desired distance approximation. We can indeed afford to add more hopset edges to the longer intervals because there are only few intervals substantially longer than the average length. Specifically, to add hopset edges to each interval, we will apply \cref{lem:weak} as a black box, setting $\delta$ to be smaller for longer intervals. Lastly, we will repeat the entire procedure for each of $\log n$ different values for the number of vertices per interval. 

\paragraph{Construction.} For each $k$ from 1 to $\log n$, split $P$ into $O(|P|/2^{k+1})$ intervals of $2^{k+1}$ vertices each, so that each interval is offset by $2^k$ from the previous interval. For a given $k$, let $\mathcal{I}_k$ be the collection of intervals.

\begin{observation}\label{obs:tog}
For any pair of vertices $u,v$ with $h_P(x,y) < 2^k$, there is at least one interval in $\mathcal{I}_k$ containing both $x$ and $y$, and at most two such intervals.
\end{observation}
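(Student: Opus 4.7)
The plan is to reduce the observation to an elementary counting argument about which intervals in $\mathcal{I}_k$ contain a specified pair of vertices. I will index each interval by its starting position along $P$: since the intervals have $2^{k+1}$ vertices and are shifted by $2^k$, the $j$-th interval occupies positions $[j \cdot 2^k,\; j \cdot 2^k + 2^{k+1} - 1]$ on $P$. Write $p_x$ and $p_y$ for the positions of $x$ and $y$ along $P$, and assume without loss of generality that $p_x \le p_y$, so $p_y - p_x = h_P(x,y) < 2^k$.

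For the existence half, I would take $j = \lfloor p_x / 2^k \rfloor$. Then $j \cdot 2^k \le p_x < (j+1) \cdot 2^k$, and combining this with $p_y < p_x + 2^k$ gives $p_y < (j+2) \cdot 2^k = j \cdot 2^k + 2^{k+1}$, so both $x$ and $y$ lie inside the interval $I_j$.

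For the upper bound, I would characterize the valid intervals explicitly: $I_j$ contains both $x$ and $y$ precisely when $j \cdot 2^k$ lies in the range $[p_y - 2^{k+1} + 1,\; p_x]$. This range consists of at most $p_x - p_y + 2^{k+1} \le 2^{k+1}$ consecutive integers, and any run of at most $2^{k+1}$ consecutive integers contains at most two multiples of $2^k$. Hence at most two intervals in $\mathcal{I}_k$ qualify.

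There is really no substantive obstacle here; the observation is a pure consequence of the geometry of the interval layout, and the argument reduces to comparing $h_P(x,y)$ against the shift $2^k$ and the interval length $2^{k+1}$. The only minor care needed is with boundary conditions: confirming that the chosen $I_j$ does not spill past the end of $P$ (which, if it did, only strengthens the upper bound since the offending $I_j$ would not exist in $\mathcal{I}_k$), and handling the open/closed endpoints correctly in the inclusion checks.
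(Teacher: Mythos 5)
Your argument is correct, and it is essentially the canonical (and only natural) argument: the paper states this as an observation without any proof, relying on exactly the geometry you spell out — existence via the interval starting at $\lfloor p_x/2^k\rfloor\cdot 2^k$, and the bound of two via the fact that at most two multiples of the offset $2^k$ can serve as a valid starting point. The only nitpick is your boundary remark: if the chosen interval would be truncated at the end of $P$, this does not threaten the count but you should note that the (truncated) final interval still contains both vertices, so existence is unaffected; this is a one-line fix and does not change the substance.
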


Now, we will define some notation that will be useful. Let $\mu_k$ be the average (weighted) length of an interval in $\mathcal{I}_k$. Note that $\mu_k = \Theta(\len(P)\cdot 2^k/|P|)$. Given an interval $I\in \mathcal{I}_k$, define $\ell(I)$ to satisfy $\len(I) \in [2^{\ell(I)}, 2^{\ell(I) + 1}]\cdot\mu_k$.


To apply \cref{lem:weak}, we need to define the input parameters. We use $P^*$, $S^*$, $\gamma^*$, and $\delta^*$ to denote these input parameters (to distinguish them from our local values of $P$, $\gamma$, and $\delta$). For each $k$ we do the following. For each interval $I\in \mathcal{I}_k$, we apply \cref{lem:weak} $k$ times, each time with a different value for $S^*$. In particular, for each $i$ from $1$ to $k$, let $S_i$ be a set of vertices constructed by sampling each vertex in $I$ independently with probability $\log n/2^i$, and let $S^*=S_i$. Note that with high probability $|S_i|=\Theta(2^{k-i+1} \cdot \log n)$. The rest of the parameters are set as follows: $P^*=I$, $\gamma^*=\gamma$, $\delta^*=\delta/2^{\ell(I)+i+3}$.

\paragraph{Size of Hopset.} For an instantiation of \cref{lem:weak} with given $k$, $I$, $i$, The number of hopset edges added is:
\begin{align*}
    O\Big(\frac{|S_i|\cdot\log (nW) }{\gamma^*\delta^*}\Big) 
&= O\Big(\frac{2^{k-i+1}\cdot \log (nW) \log n }{\gamma \delta/2^{\ell(I)+i+3}}\Big)\\
&= O\Big(\frac{\log (nW)\log n \cdot 2^{k+\ell(I)}}{\gamma\delta}\Big).
\end{align*}

Taking the sum over all $i$, the total number of hopset edges added over all instantiations of \cref{lem:weak} for fixed $k$, $I$ is:
\[O\Big(\frac{\sum_{i=1}^k 2^{k+\ell(I)}\cdot\log(nW)\log n }{\gamma\delta}\Big)
=O\Big( \frac{2^{k+\ell(I)}\cdot\log(nW)\log^2 n }{\gamma\delta}\Big)\]

 Note that at most a $1/2^{\ell(I)}$ fraction of the intervals in $\mathcal{I}_k$ can have length in $[2^{\ell(I)}, 2^{\ell(I)+1}] \cdot\mu_k$, that is, only $O(|P|/2^{k+\ell(I)})$ intervals. Thus, the total number of hopset edges added over all intervals in $\mathcal{I}_k$ with weight in $[2^\ell(I), 2^{\ell(I)+1}] \cdot \mu_k$ is: 
\[O\Big(\frac{|P|}{2^{k+\ell(I)}}\cdot \frac{2^{k+\ell(I)}\cdot\log(nW)\log^2 n }{\gamma\delta}\Big)=O\Big(\frac{|P| \log(nW)\log^2 n }{\gamma\delta}\Big)\]

Taking the sum over all $O(\log(nW))$ possible values of $\ell(I)$, the total number of hopset edges for a given $k$ is:

\[O\Big(\frac{|P| \log^2(nW)\log^2 n }{\gamma\delta}\Big).\]

Taking the sum over all $k$, the total number of hopset edges is:

\[O\Big(\frac{|P| \log^2(nW)\log^3 n }{\gamma\delta}\Big).\]


\paragraph{Hopbound and Approximation Bound.} Fix vertices $x,y$ such that $x$ appears after $y$ on $P$. Our goal is to identify an $xy$-path that satisfies the guarantee of \cref{lem:back}.

The following claim will be useful.

\begin{claim}\label{claim:outside}
At most 4 vertices in $R(x,y) \cap P$ fall outside of the subpath of $P$ from $y$ to $x$.
\end{claim}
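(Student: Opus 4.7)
}

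The plan is to prove the bound $4$ by establishing two symmetric halves: at most two vertices of $R(x,y) \cap P$ can lie strictly before $y$ on $P$, and at most two can lie strictly after $x$ on $P$. Both halves hinge on two structural facts: (i) by definition the road $R(x,y)$ is a shortest $xy$-path in $G_{aug}$ with the \emph{minimum} possible number of hops; and (ii) for any pair $u,v \in P$ with $u$ appearing before $v$ on $P$, the edges $H(P)$ supply a $2$-hop $uv$-path in $G_{aug}$ of length exactly $\dist_G(u,v)$.

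For the ``before $y$'' half, suppose toward contradiction that three distinct vertices $a,b,c \in R(x,y) \cap P$ all appear before $y$ on $P$. Among these, let $a$ be the one that appears earliest along $R(x,y)$. Then the subpath of $R(x,y)$ from $a$ to $y$ must traverse the other two before reaching $y$, so it contains at least three edges. On the other hand, because $a$ precedes $y$ on $P$, fact (ii) furnishes a $2$-hop $a$-to-$y$ path in $G_{aug}$ of length $\dist_G(a,y)$. Since every subpath of a shortest path is itself a shortest path, the $a$-to-$y$ subpath of $R(x,y)$ also has length $\dist_G(a,y)$, so swapping it for the $2$-hop alternative yields an $xy$-path in $G_{aug}$ of the same total length but with strictly fewer hops than $R(x,y)$. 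This contradicts the hop-minimality that defines the road.

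The ``after $x$'' half is entirely analogous, with $x$ playing the role previously played by $y$. If three vertices $a,b,c \in R(x,y) \cap P$ all lie after $x$ on $P$, let $c$ denote the one appearing \emph{latest} along $R(x,y)$. Then the $x$-to-$c$ subpath of $R(x,y)$ contains at least three edges, while fact (ii) applied to the pair $(x,c)$ provides a $2$-hop $x$-to-$c$ path of length $\dist_G(x,c)$ in $G_{aug}$. The same replacement argument again contradicts the hop-minimality of $R(x,y)$.

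I do not anticipate a serious obstacle; the only mild delicacy is that $x$ and $y$ themselves lie in $R(x,y) \cap P$ but are the endpoints of the subpath of $P$ from $y$ to $x$, and thus do not contribute to the count of ``outside'' vertices. Consequently, the three hypothetical extra vertices in each half are automatically distinct from both $x$ and $y$, and the two replacement arguments apply without further case analysis. Summing the two halves gives the desired bound of at most $2 + 2 = 4$.
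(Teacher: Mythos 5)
Your proof is correct and takes essentially the same route as the paper's: both split the ``outside'' vertices into those before $y$ and those after $x$ on $P$, and bound each side by $2$ by combining the exact $2$-hop forward paths supplied by $H(P)$ with the hop-minimality in the definition of the road $R(x,y)$. The paper phrases the exchange argument as ``any such vertex lies within $2$ hops of the relevant endpoint along $R(x,y)$'' rather than deriving a contradiction from a hypothetical third vertex, but the underlying replacement step is identical.
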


\begin{proof}
Consider a vertex $w$ in $R(x,y) \cap P$ that appears after $x$ on $P$. By definition, $G_{aug}$ has a 2-hop shortest path from $x$ to $w$. $R(x,y)$ is defined as a shortest path in $G_{aug}$ from $x$ to $y$ with the fewest hops among all shortest $xy$-paths in $G_{aug}$. Thus, the portion of $R(x,y)$ from $x$ to $w$ has at most 2 hops. That is, there are at most 2 vertices in $R(x,y) \cap P$ that appear after $x$. By a symmetric argument, there are at most 2 vertices in $R(x,y) \cap P$ that appear before $y$.
\end{proof}

Set $k$ and $i$ as follows:
\begin{itemize}
\item Let $k$ be such that $h_P(y,x) \in [2^{k-1}, 2^k]$. By \cref{obs:tog}, $x$ and $y$ appear together in some interval $I\in \mathcal{I}_k$. Fix $I$. (If there are two choices for $I$, pick one of them arbitrarily.)
\item Let $i$ be such that $|R(x,y)  \cap P| - 4 \in (2^i, 2^{i+1}]$. 
\end{itemize}

Let $(R(x,y) \cap P)_I$ denote the set of vertices in $R(x,y) \cap P$ that fall into the interval $I$. In the following claim we will argue that $S_i$ hits $(R(x,y) \cap P)_I$. 

\begin{claim}\label{claim:hit}With high probability $S_i$ hits $(R(x,y) \cap P)_I$. 
\end{claim}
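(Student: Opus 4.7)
The plan is to show that $(R(x,y) \cap P)_I$ is large enough (namely, strictly more than $2^i$ vertices) that a random sample retaining each vertex of $I$ with probability $\log n/2^i$ will hit it with high probability by a standard Chernoff argument.

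First I would establish the lower bound on $|(R(x,y) \cap P)_I|$. By the choice of $k$, we have $h_P(y,x) \leq 2^k$, so the subpath of $P$ from $y$ to $x$ lies within a window of at most $2^k + 1 \leq 2^{k+1}$ consecutive vertices of $P$. Since $\mathcal{I}_k$ consists of intervals of $2^{k+1}$ consecutive vertices offset by $2^k$, \cref{obs:tog} guarantees an interval $I$ containing both $x$ and $y$; because $I$ is a contiguous block of $P$ and $x,y \in I$, every vertex of $P$ between $y$ and $x$ also lies in $I$. Thus every vertex of $R(x,y) \cap P$ that falls on the $y$-to-$x$ subpath of $P$ is automatically in $(R(x,y) \cap P)_I$. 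Combining this with \cref{claim:outside}, which says that at most 4 vertices of $R(x,y) \cap P$ lie outside the $y$-to-$x$ subpath, yields
\[|(R(x,y) \cap P)_I| \;\geq\; |R(x,y) \cap P| - 4 \;>\; 2^i,\]
where the last inequality uses the definition of $i$.

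Next I would apply the sampling. Each vertex of $I$ is included in $S_i$ independently with probability $\log n / 2^i$, so the expected number of sampled vertices in $(R(x,y) \cap P)_I$ is at least $2^i \cdot (\log n / 2^i) = \log n$. A standard Chernoff bound on this sum of at least $2^i$ independent indicator variables then shows that $|S_i \cap (R(x,y) \cap P)_I| \geq 1$ with probability at least $1 - 1/n^c$ for any desired constant $c$ (by adjusting the constant hidden in the sampling probability, exactly as the paper already does for its other high-probability claims). In particular, $S_i$ hits $(R(x,y) \cap P)_I$ with high probability.

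The only real content is the geometric fact that the interval $I$ swallows the entire $y$-to-$x$ segment of $P$; everything else is a direct Chernoff calculation. So I don't anticipate a genuine obstacle, only the bookkeeping that the definitions of $k$ (from $h_P(y,x)$) and $i$ (from $|R(x,y)\cap P|$) interact correctly with the offset-by-$2^k$ structure of $\mathcal{I}_k$ and the $-4$ slack from \cref{claim:outside}.
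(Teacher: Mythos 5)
Your proposal is correct and is essentially the paper's own argument: the interval $I$ swallows the entire $y$-to-$x$ subpath of $P$, so by \cref{claim:outside} we get $|(R(x,y)\cap P)_I| \geq |R(x,y)\cap P| - 4 > 2^i$ by the choice of $i$, and then sampling each vertex of $I$ with probability $\log n/2^i$ hits this set with high probability. You merely spell out the two steps the paper states tersely (why $I$ contains the whole $y$-to-$x$ segment via \cref{obs:tog} and the offset structure of $\mathcal{I}_k$, and the standard concentration calculation), so there is no substantive difference.
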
 
\begin{proof}The subpath of $P$ from $y$ to $x$ falls entirely inside of the interval $I$, so by \cref{claim:outside}, at most 4 vertices in $R(x,y) \cap P$ fall outside of the interval $I$. That is, $|(R(x,y) \cap P)_I| \geq |R(x,y) \cap P| - 4$. Thus, by the definition of $i$, $|(R(x,y) \cap P)_I| \in (2^i, 2^{i+1}]$. 


Recall that $S_i$ is a set of vertices constructed by sampling each vertex in $I$ with probability $\log n/2^i$. Since $|(R(x,y) \cap P)_I| \in (2^i, 2^{i+1}]$, with high probability $S_i$ hits $(R(x,y) \cap P)_I$. 
\end{proof}

Because \cref{claim:hit} holds with high probability and it is easy to check that we only apply it a polynomial number of times, for the remainder of the proof we will assume that it holds deterministically.

Let $s$ be a vertex in $S_i \cap (R(x,y) \cap P)_I$, which exists by \cref{claim:hit}. We will now define an $xy$-path, and calculate its length and number of hops. Our instantiation of \cref{lem:weak} with $S^*=S_i$ and $P^*=I$ finds a path from $x$ to $s$ and a path from $s$ to $y$ since $s\in S_i \cap (R(x,y) \cap P)_I$. Our $xy$-path will be the concatenation of these two paths. This path has 6 hops. It remains to calculate the length of this path. 

We note that because $s$ is on $R(x,y)$, $s$ is on a shortest path from $x$ to $y$. Thus, the additive error incurred by our path is simply the sum of the additive error incurred by each of the two subpaths (from $x$ to $s$, and from $s$ to $y$). The length of the subpath from $x$ to $s$ given by \cref{lem:weak} is
\begin{align*}
(1+\gamma^*)\dist(x,s) + \delta^*\cdot\len(P^*)
&\leq (1+\gamma)\dist(x,s) + \frac{\delta}{2^{\ell(I)+i+3}} \cdot 2^{\ell(I)+1} \cdot \frac{\len(P)\cdot 2^k}{|P|}\\
&= (1+\gamma)\dist(x,s) + \frac{\delta \cdot \len(P) \cdot 2^{k-i-2}}{|P|}\\
&\leq (1+\gamma)\dist(x,s) + \frac{\delta}{2}\cdot\frac{\len(P)\cdot h_P(y,x)}{|P|\cdot|R(x,y) \cap P|}\text{\hspace{4mm}by definition of $k$ and $i$}.
\end{align*}

The length of the subpath from $s$ to $y$ given by \cref{lem:weak} is exactly the same except with $\dist(x,s)$ replaced with $\dist(s,y)$.
Thus, the length of the concatenation of these two subpaths is 
\[(1+\gamma)\dist(x,y) + \delta\cdot\frac{\len(P)\cdot h_P(y,x)}{ |P|\cdot|R(x,y) \cap P|}\]
as desired.

\subsection{The $\beta>n^{1/3}$ Regime}
\label{sec:large-beta}

So far we have proved the first bound of Theorem \ref{thm:main}; that is, the bound for $\beta \leq n^{1/3}$. To prove the second bound ($\beta > n^{1/3})$, we use the first bound as a black box. In particular, we apply exactly the same argument (verbatim) as Section 2.2 of \cite{KP},  except instead of applying their shortcutting algorithm in Step 3, we apply our hopset algorithm on the graph $G'$ with parameter $\beta=(n')^{1/3}/\log n$, and passing in the original value of $\eps$. Their diameter bound analysis suffices to prove both our hopbound and distance approximation.

\section{Acknowledgements}
We would like to thank Shimon Kogan and Merav Parter for sharing with us their work in \cite{KP22focs} and for numerous extremely helpful discussions. We would also like to thank Shyan Akmal for pointing out an issue in a proof from an earlier version of this paper.

\bibliographystyle{alpha}
\bibliography{bib.bib}

\begin{thebibliography}{LWWX22}

\bibitem[AB18]{AbboudB18}
Amir Abboud and Greg Bodwin.
\newblock Reachability preservers: New extremal bounds and approximation
  algorithms.
\newblock In Artur Czumaj, editor, {\em Proceedings of the Twenty-Ninth Annual
  {ACM-SIAM} Symposium on Discrete Algorithms, {SODA} 2018, New Orleans, LA,
  USA, January 7-10, 2018}, pages 1865--1883. {SIAM}, 2018.

\bibitem[ABP18]{AbboudBP18}
Amir Abboud, Greg Bodwin, and Seth Pettie.
\newblock A hierarchy of lower bounds for sublinear additive spanners.
\newblock {\em {SIAM} J. Comput.}, 47(6):2203--2236, 2018.

\bibitem[BGW20]{BernsteinGW20}
Aaron Bernstein, Maximilian~Probst Gutenberg, and Christian Wulff{-}Nilsen.
\newblock Near-optimal decremental {SSSP} in dense weighted digraphs.
\newblock In Sandy Irani, editor, {\em 61st {IEEE} Annual Symposium on
  Foundations of Computer Science, {FOCS} 2020, Durham, NC, USA, November
  16-19, 2020}, pages 1112--1122. {IEEE}, 2020.

\bibitem[Bod21]{Bodwin21}
Greg Bodwin.
\newblock New results on linear size distance preservers.
\newblock {\em {SIAM} J. Comput.}, 50(2):662--673, 2021.

\bibitem[BP20]{BenLevyP20}
Uri Ben{-}Levy and Merav Parter.
\newblock New (\emph{{\(\alpha\)}, {\(\beta\)}}) spanners and hopsets.
\newblock In Shuchi Chawla, editor, {\em Proceedings of the 2020 {ACM-SIAM}
  Symposium on Discrete Algorithms, {SODA} 2020, Salt Lake City, UT, USA,
  January 5-8, 2020}, pages 1695--1714. {SIAM}, 2020.

\bibitem[BRR10]{BermanRR10}
Piotr Berman, Sofya Raskhodnikova, and Ge~Ruan.
\newblock Finding sparser directed spanners.
\newblock In Kamal Lodaya and Meena Mahajan, editors, {\em {IARCS} Annual
  Conference on Foundations of Software Technology and Theoretical Computer
  Science, {FSTTCS} 2010, December 15-18, 2010, Chennai, India}, volume~8 of
  {\em LIPIcs}, pages 424--435. Schloss Dagstuhl - Leibniz-Zentrum f{\"{u}}r
  Informatik, 2010.

\bibitem[CE06]{CoppersmithE06}
Don Coppersmith and Michael Elkin.
\newblock Sparse sourcewise and pairwise distance preservers.
\newblock {\em {SIAM} J. Discret. Math.}, 20(2):463--501, 2006.

\bibitem[CFR20]{CaoFR20}
Nairen Cao, Jeremy~T. Fineman, and Katina Russell.
\newblock Efficient construction of directed hopsets and parallel approximate
  shortest paths.
\newblock In Konstantin Makarychev, Yury Makarychev, Madhur Tulsiani, Gautam
  Kamath, and Julia Chuzhoy, editors, {\em Proccedings of the 52nd Annual {ACM}
  {SIGACT} Symposium on Theory of Computing, {STOC} 2020, Chicago, IL, USA,
  June 22-26, 2020}, pages 336--349. {ACM}, 2020.

\bibitem[CFR21]{CaoFR21}
Nairen Cao, Jeremy~T. Fineman, and Katina Russell.
\newblock Brief announcement: An improved distributed approximate single source
  shortest paths algorithm.
\newblock In Avery Miller, Keren Censor{-}Hillel, and Janne~H. Korhonen,
  editors, {\em {PODC} '21: {ACM} Symposium on Principles of Distributed
  Computing, Virtual Event, Italy, July 26-30, 2021}, pages 493--496. {ACM},
  2021.

\bibitem[Coh00]{Cohen00}
Edith Cohen.
\newblock Polylog-time and near-linear work approximation scheme for undirected
  shortest paths.
\newblock {\em J. {ACM}}, 47(1):132--166, 2000.

\bibitem[EN19]{ElkinN19}
Michael Elkin and Ofer Neiman.
\newblock Linear-size hopsets with small hopbound, and constant-hopbound
  hopsets in {RNC}.
\newblock In Christian Scheideler and Petra Berenbrink, editors, {\em The 31st
  {ACM} on Symposium on Parallelism in Algorithms and Architectures, {SPAA}
  2019, Phoenix, AZ, USA, June 22-24, 2019}, pages 333--341. {ACM}, 2019.

\bibitem[Fin18]{Fineman18}
Jeremy~T. Fineman.
\newblock Nearly work-efficient parallel algorithm for digraph reachability.
\newblock In Ilias Diakonikolas, David Kempe, and Monika Henzinger, editors,
  {\em Proceedings of the 50th Annual {ACM} {SIGACT} Symposium on Theory of
  Computing, {STOC} 2018, Los Angeles, CA, USA, June 25-29, 2018}, pages
  457--470. {ACM}, 2018.

\bibitem[FN18]{ForsterN18}
Sebastian Forster and Danupon Nanongkai.
\newblock A faster distributed single-source shortest paths algorithm.
\newblock In Mikkel Thorup, editor, {\em 59th {IEEE} Annual Symposium on
  Foundations of Computer Science, {FOCS} 2018, Paris, France, October 7-9,
  2018}, pages 686--697. {IEEE} Computer Society, 2018.

\bibitem[GW20]{GutenbergW20}
Maximilian~Probst Gutenberg and Christian Wulff{-}Nilsen.
\newblock Decremental {SSSP} in weighted digraphs: Faster and against an
  adaptive adversary.
\newblock In Shuchi Chawla, editor, {\em Proceedings of the 2020 {ACM-SIAM}
  Symposium on Discrete Algorithms, {SODA} 2020, Salt Lake City, UT, USA,
  January 5-8, 2020}, pages 2542--2561. {SIAM}, 2020.

\bibitem[Hes03]{Hesse03}
William Hesse.
\newblock Directed graphs requiring large numbers of shortcuts.
\newblock In {\em Proceedings of the Fourteenth Annual {ACM-SIAM} Symposium on
  Discrete Algorithms, January 12-14, 2003, Baltimore, Maryland, {USA}}, pages
  665--669. {ACM/SIAM}, 2003.

\bibitem[HKN14]{HenzingerKN14}
Monika Henzinger, Sebastian Krinninger, and Danupon Nanongkai.
\newblock Sublinear-time decremental algorithms for single-source reachability
  and shortest paths on directed graphs.
\newblock In David~B. Shmoys, editor, {\em Symposium on Theory of Computing,
  {STOC} 2014, New York, NY, USA, May 31 - June 03, 2014}, pages 674--683.
  {ACM}, 2014.

\bibitem[HKN15]{HenzingerKN15}
Monika Henzinger, Sebastian Krinninger, and Danupon Nanongkai.
\newblock Improved algorithms for decremental single-source reachability on
  directed graphs.
\newblock In Magn{\'{u}}s~M. Halld{\'{o}}rsson, Kazuo Iwama, Naoki Kobayashi,
  and Bettina Speckmann, editors, {\em Automata, Languages, and Programming -
  42nd International Colloquium, {ICALP} 2015, Kyoto, Japan, July 6-10, 2015,
  Proceedings, Part {I}}, volume 9134 of {\em Lecture Notes in Computer
  Science}, pages 725--736. Springer, 2015.

\bibitem[HP19]{HuangP19}
Shang{-}En Huang and Seth Pettie.
\newblock Thorup-zwick emulators are universally optimal hopsets.
\newblock {\em Inf. Process. Lett.}, 142:9--13, 2019.

\bibitem[HP21]{HuangP21}
Shang{-}En Huang and Seth Pettie.
\newblock Lower bounds on sparse spanners, emulators, and diameter-reducing
  shortcuts.
\newblock {\em {SIAM} J. Discret. Math.}, 35(3):2129--2144, 2021.

\bibitem[JLS19]{LiuJS19}
Arun Jambulapati, Yang~P. Liu, and Aaron Sidford.
\newblock Parallel reachability in almost linear work and square root depth.
\newblock In David Zuckerman, editor, {\em 60th {IEEE} Annual Symposium on
  Foundations of Computer Science, {FOCS} 2019, Baltimore, Maryland, USA,
  November 9-12, 2019}, pages 1664--1686. {IEEE} Computer Society, 2019.

\bibitem[KP22a]{KP22icalp}
Shimon Kogan and Merav Parter.
\newblock Beating matrix multiplication for $n^{1/3}$-directed shortcuts.
\newblock In {\em To appear in ICALP}, 2022.

\bibitem[KP22b]{KP22focs}
Shimon Kogan and Merav Parter.
\newblock Having hope in hops: New spanners, preservers and lower bounds for
  hopsets.
\newblock In {\em Accepted to FOCS}, 2022.

\bibitem[KP22c]{KP}
Shimon Kogan and Merav Parter.
\newblock New diameter-reducing shortcuts and directed hopsets: Breaking the
  {{$O(\sqrt{n})$}} barrier.
\newblock In {\em Proceedings of the 2022 {ACM-SIAM} Symposium on Discrete
  Algorithms (SODA 2022)}, 2022.

\bibitem[KS97]{KleinS97}
Philip~N. Klein and Sairam Subramanian.
\newblock A randomized parallel algorithm for single-source shortest paths.
\newblock {\em J. Algorithms}, 25(2):205--220, 1997.

\bibitem[KS21]{KarczmarzS21}
Adam Karczmarz and Piotr Sankowski.
\newblock A deterministic parallel {APSP} algorithm and its applications.
\newblock In D{\'{a}}niel Marx, editor, {\em Proceedings of the 2021 {ACM-SIAM}
  Symposium on Discrete Algorithms, {SODA} 2021, Virtual Conference, January 10
  - 13, 2021}, pages 255--272. {SIAM}, 2021.

\bibitem[LWWX22]{LuWWX22}
Kevin Lu, Virginia~Vassilevska Williams, Nicole Wein, and Zixuan Xu.
\newblock Better lower bounds for shortcut sets and additive spanners via an
  improved alternation product.
\newblock In Joseph~(Seffi) Naor and Niv Buchbinder, editors, {\em Proceedings
  of the 2022 {ACM-SIAM} Symposium on Discrete Algorithms, {SODA} 2022, Virtual
  Conference / Alexandria, VA, USA, January 9 - 12, 2022}, pages 3311--3331.
  {SIAM}, 2022.

\bibitem[Ras10]{Ras10}
Sofya Raskhodnikova.
\newblock Transitive-closure spanners: {A} survey.
\newblock In {\em Property Testing - Current Research and Surveys}, volume 6390
  of {\em Lecture Notes in Computer Science}, pages 167--196. Springer, 2010.

\bibitem[SN21]{ShabatN21}
Idan Shabat and Ofer Neiman.
\newblock A unified framework for hopsets and spanners.
\newblock {\em CoRR}, abs/2108.09673, 2021.

\bibitem[SS99]{ShiS99}
Hanmao Shi and Thomas~H. Spencer.
\newblock Time-work tradeoffs of the single-source shortest paths problem.
\newblock {\em J. Algorithms}, 30(1):19--32, 1999.

\bibitem[Tho92]{Thorup92}
Mikkel Thorup.
\newblock On shortcutting digraphs.
\newblock In Ernst~W. Mayr, editor, {\em Graph-Theoretic Concepts in Computer
  Science, 18th International Workshop, {WG} '92, Wiesbaden-Naurod, Germany,
  June 19-20, 1992, Proceedings}, volume 657 of {\em Lecture Notes in Computer
  Science}, pages 205--211. Springer, 1992.

\bibitem[UY91]{UllmanY91}
Jeffrey~D. Ullman and Mihalis Yannakakis.
\newblock High-probability parallel transitive-closure algorithms.
\newblock {\em {SIAM} J. Comput.}, 20(1):100--125, 1991.

\end{thebibliography}

\appendix

\section{Implications of our Improved Hopset for Distance Preservers}
\label{sec:app_preservers}

As discussed in the introduction, very recent work by Parter and Kogan presents a black-box transformation from hopsets to distance preservers \cite{KP22focs}. As a result, our new $(\beta,\eps)$-hopset leads to improved bounds for approximate distance preservers. We now discuss in more detail the relevant definitions and related work.


\paragraph{Definitions and Existing Results for Distance Preservers:}
Given a graph $G = (V,E)$ and a set of demand pairs $P \subset V \times V$, an \emph{exact distance preserver} is a sparse subgraph $\gstar \subseteq G$ such that $\dist_{\gstar}(u,v) = \dist_G(u,v)$ for all $u,v \in P$. There are two common relaxations: a $(1+\eps)$-\emph{approximate} distance preserver guarantees that for all pairs $u,v \in P$, $\dist_{\gstar}(u,v) \leq  (1+\eps)\cdot \dist_G(u,v)$; a \textit{reachability} preserver guarantees that for all pairs $u,v \in P$, if $u$ can reach $v$ in $G$, then $u$ can reach $v$ in $\gstar$.

Focusing on directed graphs, and letting $p = |P|$ be the number of vertex pairs, the state-of-the-art for exact distance preservers is a construction by Coopersmith and Elkin with $O(np^{1/2})$ edges \cite{CoppersmithE06}, and one by Bodwin with $O(n + n^{2/3} \cdot p)$ edges \cite{Bodwin21}. For the simpler problem of reachability preservers, Abboud and Bodiwn showed a distance preserver with $O(n + n^{2/3}p^{2/3})$ edges \cite{AbboudB18}.


Until extremely recently, there were no upper bounds for $(1+\eps)$-approximate distance preservers that improved upon the bounds for exact distance preservers. But in extremely recent (unpublished) work, Kogan and Parter gave a black-box transformation from hopsets to distance preservers (see Theorem 1.1 in \cite{KP22focs}), and in particular showed that their $(\beta,\eps)$-hopset from \cite{KP} implies a $(1+\eps)$-approximate distance preserver with $\otil(np^{2/5} + n^{2/3}p^{2/3})$ edges. For $p \geq n^{5/4}$, this improves upon the bounds for exact distance preservers, and in fact matches (up to log factors) the best known bounds for the simpler problem of reachability preservers. 

Applying the same black box of Kogan and Parter \cite{KP22focs}, our new $(\beta, \eps)$-hopset directly leads to a further improvement.

\begin{theorem}
For any directed graph $G = (V,E)$ with integer weights that are polynomial in $n$, and any fixed $\eps > 0$, one can compute a $(1+\eps)$-approximate distance preserver with $\otil(np^{1/3} + n^{2/3}p^{2/3})$ edges.
\end{theorem}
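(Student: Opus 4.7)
The plan is to invoke the black-box reduction from approximate hopsets to approximate distance preservers of Kogan and Parter (Theorem 1.1 of \cite{KP22focs}), with our new $(\beta,\eps)$-hopset from Theorem \ref{thm:main} playing the role of the input hopset. This is exactly parallel to how Kogan and Parter derived their own preserver bound $\otil(np^{2/5}+n^{2/3}p^{2/3})$: they instantiated the same reduction using their $(\otil(n^{2/5}),\eps)$-hopset from \cite{KP}. Here we instead feed in our improved hopset, which has a polynomially better $\beta$-vs-size tradeoff across the entire parameter range.

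First, I would verify that the hypotheses of the black box are met. Since edge weights are assumed polynomial in $n$, $\log(nW) = O(\log n)$, so the $\log(nW)$ factors in the bound from Theorem \ref{thm:main} are absorbed into the $\otil$ notation, leaving us with a $(\beta,\eps)$-hopset of size $\otil(n^{2}/\beta^{3})$ for $\beta \leq n^{1/3}$ and $\otil(n^{3/2}/\beta^{3/2})$ for $\beta > n^{1/3}$. Applicability to arbitrary directed graphs and to any fixed $\eps \in (0,1)$ is immediate from the statement of Theorem \ref{thm:main}.

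Next, I would track how these hopset parameters propagate through the reduction. The reduction of \cite{KP22focs} transforms any such $(\beta,\eps)$-hopset tradeoff into a preserver-size bound parameterized by $p$; with $\beta$ chosen (inside the reduction) as a suitable function of $p$ to optimize the two additive terms of the preserver bound. Because our tradeoff is a polynomial factor better than that of Kogan and Parter throughout the entire range of $\beta$, the first term of the preserver bound improves from $\otil(np^{2/5})$ down to $\otil(np^{1/3})$. The second term $\otil(n^{2/3}p^{2/3})$ arises from a component of the reduction that is independent of the hopset (it already matches the Abboud--Bodwin bound \cite{AbboudB18} for the simpler problem of reachability preservers), and so remains unchanged.

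Because the argument is a pure invocation of an existing black box, there is no real obstacle; the only substantive verification is that Theorem \ref{thm:main} satisfies the hypotheses of Theorem 1.1 of \cite{KP22focs}, which was done above. The conclusion is the claimed $(1+\eps)$-approximate distance preserver of size $\otil(np^{1/3}+n^{2/3}p^{2/3})$.
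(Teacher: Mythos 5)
Your proposal matches the paper's argument exactly: the paper likewise obtains this theorem by plugging the improved $(\beta,\eps)$-hopset of Theorem~\ref{thm:main} into the black-box hopset-to-preserver reduction of Kogan and Parter (Theorem 1.1 of \cite{KP22focs}), just as they did with their own hopset to get $\otil(np^{2/5}+n^{2/3}p^{2/3})$. Your added checks (polynomial weights absorbing the $\log(nW)$ factors, and the $n^{2/3}p^{2/3}$ term being hopset-independent) are consistent with how the paper treats the reduction.
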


Our bound improves upon $\otil(np^{2/5} + n^{2/3}p^{2/3})$ of Kogan and Parter \cite{KP22focs}. Moreover, for all $p \geq n$ our bound matches (up to log factors) the sparsity of Abboud and Bodwin \cite{AbboudB18} for the simpler problem of reachability preservers.

\section{Path Hopset Impossibility}\label{app}

Here, we prove the following claim, which was referenced as motivation in the technical overview. The proof follows from the shortcut set lower bounds of Hesse \cite{Hesse03} (or Huang and Pettie \cite{HuangP21}) using a simple argument, and we include it here for completeness.

\begin{claim}Given a positive integer $n$ and a number $\eps\in [0,1]$, for any positive integer $\ell$ that is polynomial in $n$, there exists a weighted directed graph $G$ on $n$ vertices containing a path $P$ with $\ell$ hops such that it is impossible to add $\tilde{O}(|P|)$ edges $H$ to $G$ so that every pair of vertices $x,y\in P$ where $x$ reaches $y$ in $G$, has a subpolynomial-hop path of length at most $(1+\eps)\dist(x,y)$ in $G\cup H$.\end{claim}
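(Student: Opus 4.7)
The plan is a direct reduction from the shortcut set lower bound of Hesse (or Huang and Pettie). Let $G_H$ denote their lower-bound DAG on $n_1=\ell+1$ vertices; it has the property that for any edge set $H'\subseteq V(G_H)\times V(G_H)$ of size $\otil(n_1)$, there is a pair of vertices reachable in $G_H$ whose hop-distance in $G_H\cup H'$ is at least $n_1^{\Omega(1)}$. I would construct the claimed $G$ by taking $G_H$, assigning weight $0$ to all of its edges, and then overlaying a Hamiltonian directed path $P$ through $V(G_H)$ (e.g.\ in topological order of $G_H$) consisting of $\ell$ edges of weight $1$. If $n_1<n$, pad with isolated vertices. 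The result is a weighted digraph on $n$ vertices containing an $\ell$-hop path $P$.

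Now suppose, for contradiction, that there exists an edge set $H$ with $|H|=\otil(|P|)=\otil(n_1)$ satisfying the claim. The key observation is that every pair $(x,y)$ reachable in $G_H$ lies on $P$ (since $P$ spans $V(G_H)$), is reachable in $G$, and has $\dist_G(x,y)=0$ because every edge of $G_H$ has weight $0$. Therefore the hypothesized subpolynomial-hop $(1+\eps)$-approximate $xy$-path in $G\cup H$ must have total weight at most $(1+\eps)\cdot 0=0$, and hence can use only weight-$0$ edges. Defining $H_0:=\{(u,v)\in H : w(u,v)=0\}$, such a path lives entirely in $G_H\cup H_0$, and in particular avoids all (weight-$1$) edges of $P$.

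Consequently, the edge set $H_0$ has size at most $|H|=\otil(n_1)$ and, by the assumption, yields subpolynomial-hop paths for every pair reachable in $G_H$. Since $\ell$ is polynomial in $n$, the quantity $n_1^{\Omega(1)}$ is polynomial in $n$ and in particular not subpolynomial, directly contradicting the Hesse/Huang--Pettie lower bound applied to $H_0$. The main subtlety I anticipate is that $H_0$ may contain edges $(u,v)$ with $u\not\to v$ in $G_H$ (``spurious'' edges outside the transitive closure), whereas the classical statement is typically phrased for shortcut sets whose edges lie in the transitive closure. Fortunately, the counting arguments underlying those lower bounds bound, for each added edge, the number of pairs whose hop-distance it can shorten, and these counts carry through verbatim for arbitrary edges in $V\times V$; so the needed extension is routine and does not require any modification of $G_H$ itself.
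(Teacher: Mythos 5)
Your overall reduction is the same one the paper uses (overlay an $\ell$-hop path on Hesse's lower-bound DAG and argue that the promised low-hop approximate paths cannot use the path edges, so the cheap added edges would form a small shortcut set for the DAG, a contradiction). The step where your write-up breaks, however, is exactly the one you dismiss as routine: you assert that the Hesse / Huang--Pettie counting arguments ``carry through verbatim for arbitrary edges in $V\times V$,'' i.e.\ that the shortcut-set lower bound still holds when the added edges need not respect the transitive closure. That is false. A bidirectional star through a single hub (edges $(v,c)$ and $(c,v)$ for all $v$, only $2n$ edges) gives \emph{every} pair of vertices a $2$-hop path in the augmented graph, so no graph whatsoever has a lower bound against arbitrary added edges; the restriction to transitive-closure-respecting (equivalently, distance-non-decreasing) edges is not a technicality but the whole content of the constraint. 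Worse, in your zero-weight construction this is not merely a gap in the proof: a weight-$0$ bidirectional star added to your $G$ literally satisfies the property you are trying to rule out, so the unconstrained statement you set out to prove is false.

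The claim is only meaningful (and the paper only proves it) under the implicit hypothesis that $H$ consists of hopset edges, i.e.\ $\dist_{G\cup H}=\dist_G$, so every edge $(u,v)\in H$ has weight at least $\dist_G(u,v)$ and connects a reachable pair. The paper makes this explicit and combines it with heavy ($2n$) weights on the overlaid path: any $H$-edge light enough to appear on a path of weight less than $2n$ must then lie in the transitive closure of the DAG, so the surviving edges form a legitimate shortcut set and Hesse's bound applies. Your construction can be repaired the same way: once you invoke the hopset condition, a weight-$0$ edge $(u,v)\in H$ forces $\dist_G(u,v)=0$, hence $u$ reaches $v$ in $G_H$, so $H_0$ lies in the transitive closure of $G_H$ and is a genuine $\tilde{O}(n_1)$-size shortcut set, yielding the contradiction. (Two smaller points: the paper's conventions use positive weights, so weight-$0$ DAG edges are a slight departure, and the paper instead gives the DAG weight-$1$ edges and the path weight-$2n$ edges; with the hopset condition either weighting works.) As written, though, the spurious-edge step is a genuine error, not a routine extension.
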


\begin{proof}
Hesse \cite{Hesse03} proved that there exists a DAG with no shortcut set of size $\tilde{O}(n)$ with subpolynomial hopbound. Construct the graph $G$ as follows. Start with a copy of Hesse's DAG $D$ on $\ell$ vertices, where all edges have weight one; now, add additional edges to form a path $P$ such that $P$ has $\ell$ hops, the weight of each edge of $P$ is $2n$, and the vertices of $P$ are ordered in the reverse of the topological order of $D$. 

Suppose $G$ has a hopset $H$ as specified in the statement of the claim. Then, we claim that $D$ has a shortcut set of size $\tilde{O}(\ell)$ with subpolynomial hopbound (a contradiction). Fix a pair of vertices $s,t$ such that $s$ reaches $t$ in the DAG $D$. Consider the subpolynomial-hop path $Q$ in $G\cup H$ from $s$ to $t$ of length at most $(1+\eps)\dist_G(s,t)$. Since $D$ is unweighted, $\dist_G(s,t)\leq n$, so $(1+\eps)\dist_G(s,t)<2n$. Thus, $Q$ does not include any edges from $P$, nor any any edges in $H$ of weight $\geq 2n$. Every edge $u,v$ in $H$ has weight at least $\dist(u,v)$ because otherwise $\dist_{G\cup H}(u,v)<\dist_G(u,v)$, which violates the definition of a hopset. Thus, for every edge $(u,v)$ of $H$ with weight $<2n$, $u$ reaches $v$ in the DAG $D$. Therefore, $H$ is a valid shortcut set for $D$. 
\end{proof}




\end{document}